\newtheorem{theorem}{Theorem}%[section]
\newtheorem{lemma}{Lemma}
\newcommand{\p}{\mathbb{P}} % Probability P
\newcommand{\E}{\mathbb{E}} % Expectation E
\begin{document}

%%%%%%%%%%%%%
%%% Title %%%
%%%%%%%%%%%%%

\title{Tree trace reconstruction using subtraces}
\author{
	Tatiana Brailovskaya 
	\thanks{Princeton University; \url{tatianab@princeton.edu}.} 
	\and
	Mikl\'os Z.\ R\'acz
	\thanks{Princeton University; \url{mracz@princeton.edu}. Research supported in part by NSF grant DMS 1811724 and by a Princeton SEAS Innovation Award.} 
}
\date{\today}

\maketitle

%%%%%%%%%%%%%%%%
%%% Abstract %%%
%%%%%%%%%%%%%%%%

\begin{abstract}
    Tree trace reconstruction aims to learn the binary node labels of a tree, given independent samples of the tree passed through an appropriately defined deletion channel. In recent work, Davies, R\'acz, and Rashtchian~\cite{DRRAAP} used combinatorial methods to show that $\exp(\mathcal{O}(k \log_{k} n))$ samples suffice to reconstruct a complete $k$-ary tree with $n$ nodes with high probability. We provide an alternative proof of this result, which allows us to generalize it to a broader class of tree topologies and deletion models. In our proofs, we introduce the notion of a subtrace, which enables us to connect with and generalize recent mean-based complex analytic algorithms for string trace reconstruction. 
\end{abstract}

%%%%%%%%%%%%%%%%
%%% Document %%%
%%%%%%%%%%%%%%%%

%%%%%%%%%%%%%%%%%%%%%%%%%%%%%%%%%%%%%%%%%%%%
\section{Introduction} \label{sec:intro} %%%
%%%%%%%%%%%%%%%%%%%%%%%%%%%%%%%%%%%%%%%%%%%%

Trace reconstruction is a fundamental statistical reconstruction problem which has received much attention lately. Here the goal is to infer an unknown binary string of length $n$, given independent copies of the string passed through a deletion channel. The deletion channel deletes each bit in the string independently with probability $q$ and then concatenates the surviving bits into a \textit{trace}. The goal is to learn the original string with high probability using as few traces as possible. 

The trace reconstruction problem was introduced two decades ago~\cite{levenshtein2001efficient,BatuKannan04-RandomCase}, and despite lots of work over the past two decades~\cite{HolensteinMPW08,DeOdonnellServedio17-WorseCase,de2019optimal,NazarovPeres17-WorseCase,HartungHP18,holden2020subpolynomial,LB1,LB2,chase2020ub,chen2020polynomialtime,grigorescu2020limitations}, understanding the sample complexity of trace reconstruction remains wide open. 
Specifically, the best known upper bound is due to Chase~\cite{chase2020ub} who showed that $\exp(\widetilde{\mathcal{O}}(n^{1/5}))$ samples suffice; 
this work builds upon previous breakthroughs by De, O'Donnell, and Servedio~\cite{DeOdonnellServedio17-WorseCase,de2019optimal}, and Nazarov and Peres \cite{NazarovPeres17-WorseCase}, who simultaneously obtained an upper bound of $\exp(\mathcal{O}(n^{1/3}))$. In contrast, the best known lower bound is $\widetilde{\Omega}(n^{3/2})$ (see~\cite{LB1,LB2}). 
Considering average-case strings as opposed to worst-case ones reduces the sample complexity considerably, but the large gap remains: the current best known upper and lower bounds are $\exp(\mathcal{O}(\log^{1/3} n))$ (see~\cite{holden2020subpolynomial}) and $\widetilde{\Omega}(\log^{5/2}n)$ (see~\cite{LB2}), respectively. As we can see, the bounds are exponentially far apart for both the worst-case and average-case problems.

Given the difficulty of the trace reconstruction problem, several variants have been introduced, in part to study the strengths and weaknesses of various techniques. 
These include generalizing trace reconstruction from strings to trees~\cite{DRRAAP} and matrices~\cite{KMMP19}, coded trace reconstruction~\cite{CGMR,BLS20}, population recovery~\cite{Ban,BCSS19,narayanan2021population}, and more~\cite{narayanan2020circular,chen2020polynomialtime,davies2020approximate}. 

In this work we consider tree trace reconstruction, introduced recently by Davies, R\'acz, and Rashtchian~\cite{DRRAAP}. In this problem we aim to learn the binary node labels of a tree, given independent samples of the tree passed through an appropriately defined deletion channel. 
The additional tree structure makes reconstruction easier; indeed, in several settings Davies, R\'acz, and Rashtchian~\cite{DRRAAP} show that the sample complexity is polynomial in the number of bits in the worst case. Furthermore, Maranzatto~\cite{maranzatto_thesis} showed that strings are the hardest trees to reconstruct; that is, the sample complexity of reconstructing an arbitrary labeled tree with $n$ nodes is no more than the sample complexity of reconstructing an arbitrary labeled $n$-bit string.    

As demonstrated in~\cite{DRRAAP}, tree trace reconstruction provides a natural testbed for studying the interplay between combinatorial and complex analytic techniques that have been used to tackle the string variant. Our work continues in this spirit. In particular, Davies, R\'acz, and Rashtchian~\cite{DRRAAP} used combinatorial methods to show that $\exp(\mathcal{O}(k \log_{k} n))$ samples suffice to reconstruct complete $k$-ary trees with $n$ nodes, and here we provide an alternative proof using complex analytic techniques. This alternative proof also allows us to generalize the result to a broader class of tree topologies and deletion models. 
Before stating our results we first introduce the tree trace reconstruction problem more precisely. 

Let $X$ be a rooted tree with unknown binary labels on its $n$ non-root nodes. 
We assume that~$X$ has an ordering of its nodes, and the children of a given node have a left-to-right ordering. 
The goal of tree trace reconstruction is to learn the labels of $X$ with high probability, using as few traces as possible, knowing only the deletion model, the deletion probability $q < 1$, and the tree structure of~$X$. Throughout this paper, we write `with high probability' to mean with probability tending to~$1$ as~$n \to \infty$. 

While for strings there is a canonical model of the deletion channel, there is no such canonical model for trees. 
Previous work in~\cite{DRRAAP} considered two natural extensions of the string deletion channel to trees: the Tree Edit Distance (TED) deletion model and the Left-Propagation (LP) deletion model; see~\cite{DRRAAP} for details. 
Here we focus on the TED model, while also introducing a new deletion model, termed All-Or-Nothing (AON), which is more `destructive' than the other models. 
In both models the root never gets deleted. 

\begin{itemize}
    \item \textbf{Tree Edit Distance (TED) deletion model:}  
    Each non-root node is deleted independently with probability $q$ and deletions are associative. 
    When a node $v$ gets deleted, all of the children of~$v$ now become children of the parent of $v$. 
    Equivalently, contract the edge between $v$ and its parent, retaining the label of the parent. 
    The children of $v$ take the place of $v$ in the left-to-right order; in other words, the original siblings of $v$ that are to the left of $v$ and survive are now to the left of the children of $v$, and the same holds to the right of $v$. 
    
    \item \textbf{All-Or-Nothing (AON) deletion model:} 
    Each non-root node is marked independently with probability $q$. 
    If a node $v$ is marked, then the whole subtree rooted at $v$ is deleted. 
    In other words, a node is deleted if and only if it is marked or it has an ancestor which is marked. 
\end{itemize}

Figure~\ref{fig:del_models} illustrates these two deletion models. 
We refer to~\cite{DRRAAP} for motivation and further remarks on the TED deletion model. While the AON deletion model is significantly more destructive than the TED deletion model, 
an advantage of the tools we develop in this work is that we are able to obtain similar results for arbitrary tree topologies under the AON deletion model. 

\begin{figure}[t]
     \centering
     \begin{subfigure}[b]{0.31\textwidth}
         \centering
         \includegraphics[width=\textwidth]{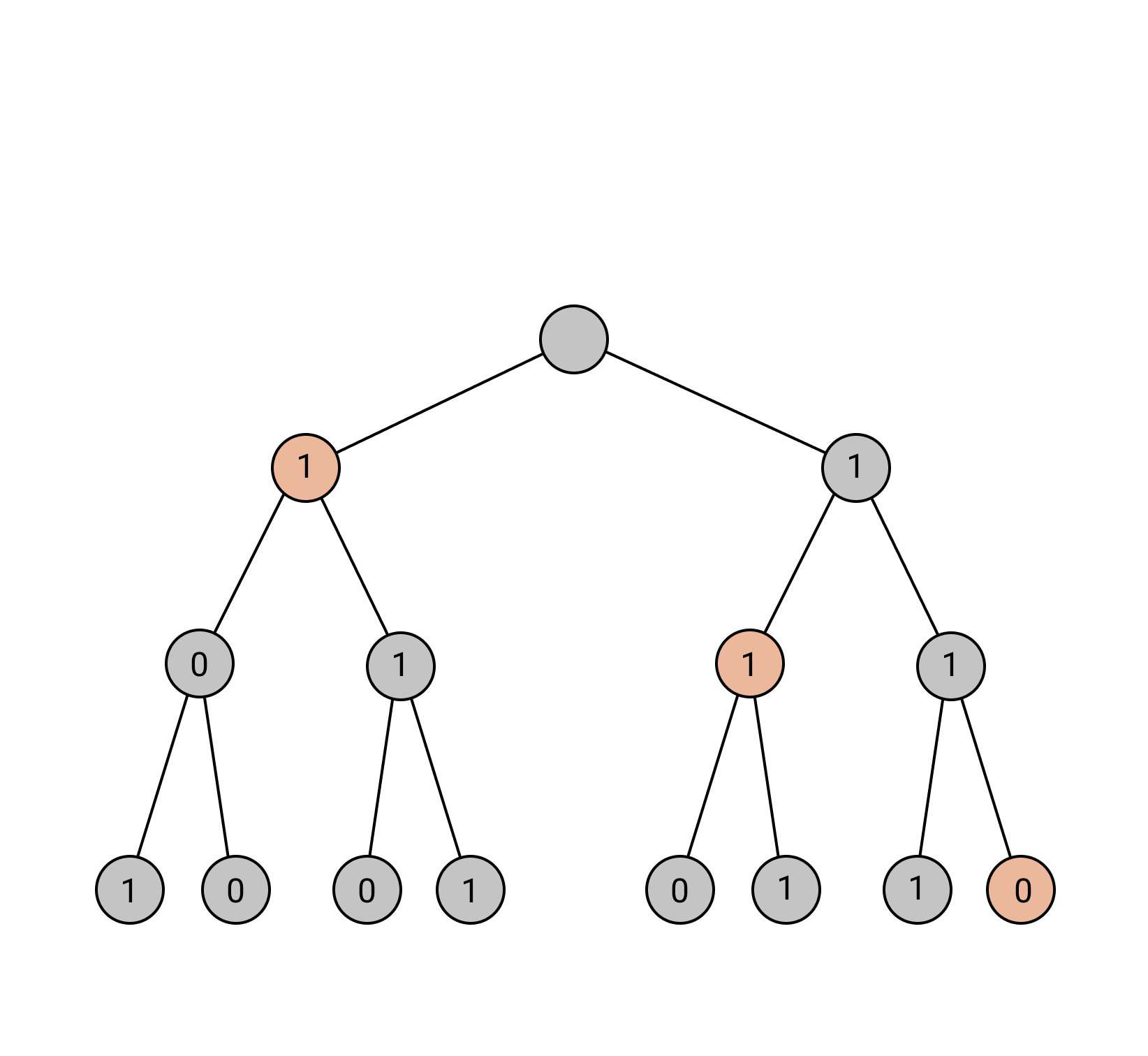}
         \caption{Original tree}
         \label{fig:orig}
     \end{subfigure}
     %\hfill
     \begin{subfigure}[b]{0.31\textwidth}
         \centering
         \includegraphics[width=\textwidth]{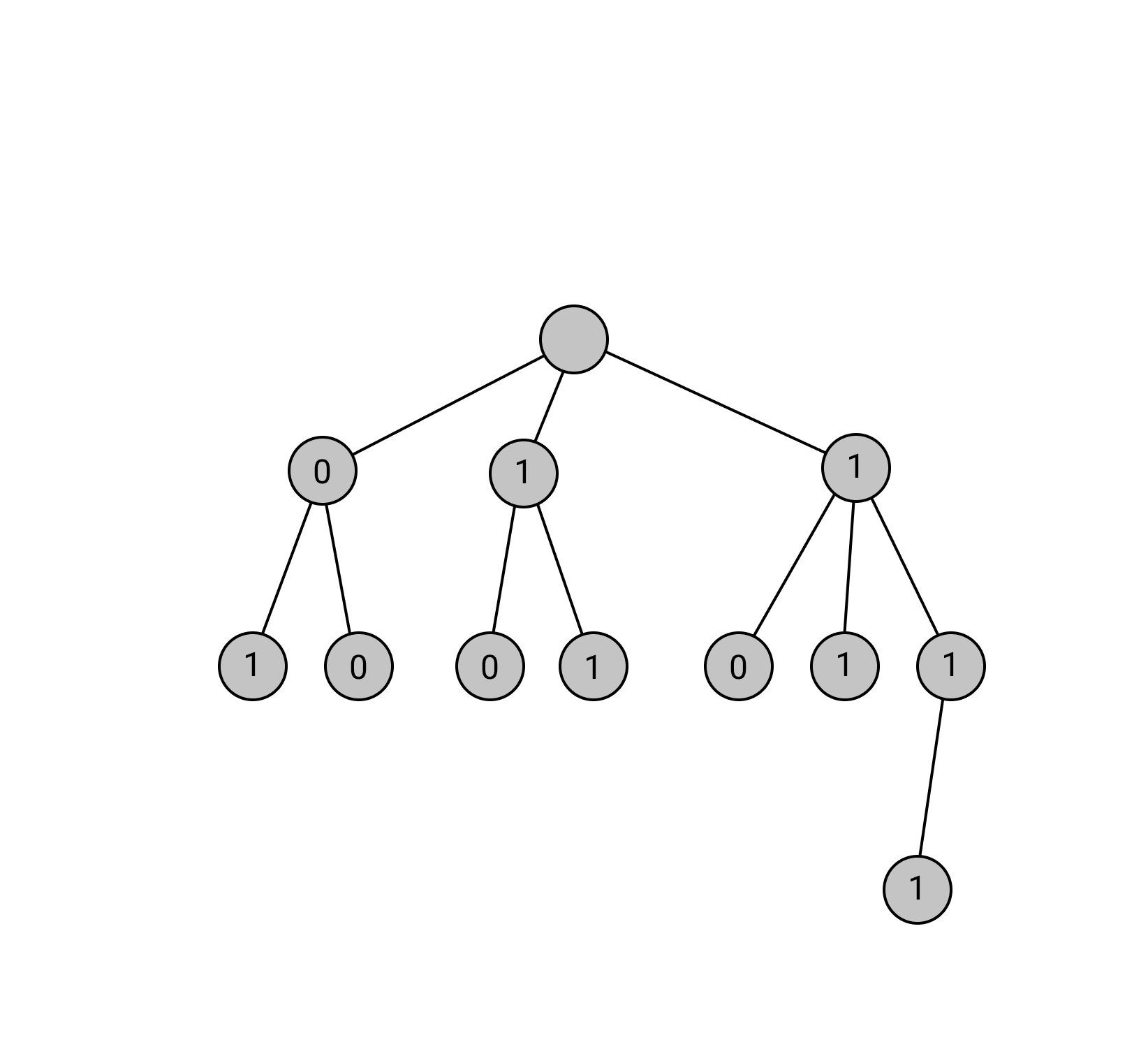}
         \caption{TED model}
         \label{fig:ted_model}
     \end{subfigure}
     %\hfill
     \begin{subfigure}[b]{0.31\textwidth}
         \centering
         \includegraphics[width=\textwidth]{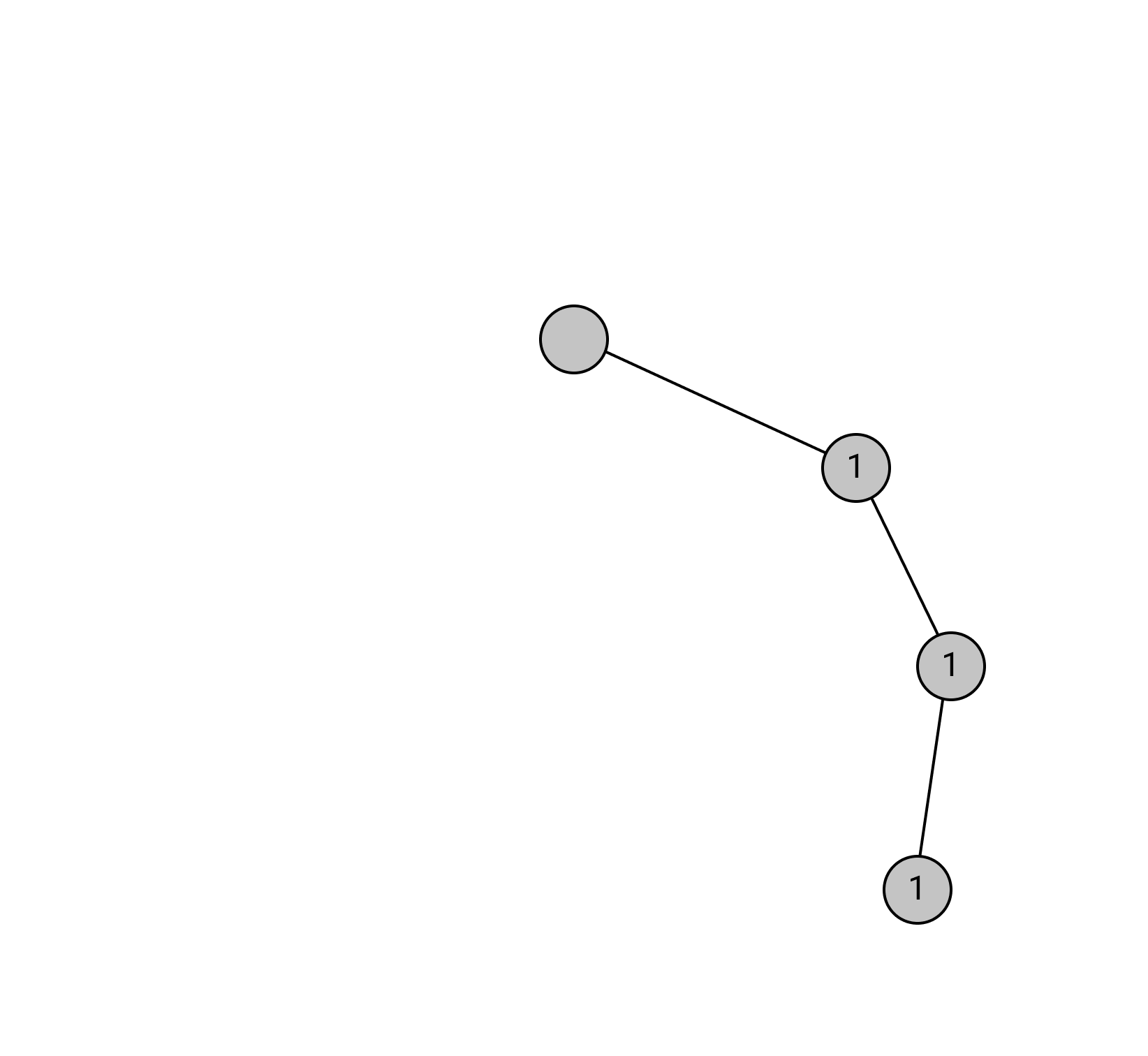}
         \caption{AON model}
         \label{fig:aon}
     \end{subfigure}
        \caption{Actions of deletion models on a sample tree. Original tree in (a), with orange nodes to be deleted. Resulting trace in the TED model (b) and the AON model (c).}
        \label{fig:del_models}
\end{figure}

Before we state our results, we recall a result of Davies, R\'acz, and Rashtchian~\cite[Theorem~4]{DRRAAP}. 

\begin{theorem}[\cite{DRRAAP}]\label{orig_thm} 
In the TED model, there exists a finite constant $C$ depending only on $q$ such that $\exp(C k\log_k n)$ traces suffice to reconstruct a complete $k$-ary tree on $n$ nodes with high probability (here $k \geq 2$).
\end{theorem}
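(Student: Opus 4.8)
The plan is to follow the standard two-step template for such reconstruction results: first reduce the reconstruction of the entire labelled tree to a collection of single-node decision problems via a union bound, and then solve each decision problem with a mean-based statistic whose expectation is provably sensitive to the label being reconstructed. Since the target sample complexity $\exp(Ck\log_k n) = n^{\Theta(k/\log k)}$ is polynomial in $n$ for fixed $k$, there is ample room in the union bound: it suffices to exhibit, for each non-root node $v$, an observable (mean-based) statistic of a single trace whose expectation changes by at least $\delta = \exp(-\Theta(k\log_k n))$ when the label $x_v$ is flipped and all other labels are held fixed. Given such a separation, a Hoeffding bound shows that $O(\delta^{-2}\log n) = \exp(O(k\log_k n))$ independent traces determine the sign of the estimated difference, and hence $x_v$, with error probability $o(1/n)$; a union bound over the $n$ nodes (reconstructing top-down, so that ancestor and left-sibling labels may be treated as already known) then recovers all of $X$ with high probability.

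The crux, and the place where the \emph{subtrace} enters, is producing such a statistic together with its separation guarantee. First I would fix a target node $v$ at depth $d \le L := \lceil \log_k n\rceil$ with address $(a_1,\dots,a_d)$, and define a canonical way to read a binary string --- the subtrace --- out of a tree trace by descending from the root along the trace-tree positions dictated by $(a_1,\dots,a_d)$ and recording the labels encountered. The key structural fact I would establish is that, because the TED channel deletes each node independently and preserves both the ancestor order and the left-to-right order, this subtrace is distributed exactly as the ordinary string deletion trace (with the same deletion probability $q$) of the binary string $z^{(v)}$ formed from the labels along the relevant root-to-$v$ portion of $X$. This is precisely the reduction that imports the string machinery: the per-symbol means of the subtrace take the form $\sum_{i\ge j} z^{(v)}_i \binom{i-1}{j-1} p^{\,j} q^{\,i-j}$ with $p = 1-q$, exactly as in string trace reconstruction.

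With the subtrace identified as a genuine string trace, I would apply the mean-based complex-analytic separation of Nazarov--Peres and De--O'Donnell--Servedio, which rests on a Borwein--Erd\'elyi-type lower bound for Littlewood polynomials on an arc of the unit circle: forming the generating function $\sum_j (\E[Y_j] - \E[Y'_j])\,w^{j}$ of the per-symbol mean differences and relating it on a suitable contour to the difference polynomial $\sum_i (z^{(v)}_i - {z'}^{(v)}_i)\, w^{i}$, one concludes that if $z^{(v)} \neq {z'}^{(v)}$ then some coordinate mean differs by at least $\exp(-\Theta(m))$, where $m$ is the effective length of $z^{(v)}$. The effective length here is $\Theta(k\log_k n)$ rather than merely the depth $\log_k n$: to pin down the horizontal trace-position at each of the $L$ levels of the descent one must account for the (up to) $k-1$ surviving siblings that TED can promote into competing positions, so the labels controlling the subtrace distribution number $\Theta(kL)$. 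This yields $\delta = \exp(-\Theta(k\log_k n))$, matching the target when combined with the first paragraph.

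The main obstacle I anticipate is exactly this subtrace construction: one must define the canonical descent so that (i) the resulting string is \emph{provably} distributed as an i.i.d.\ deletion trace of a fixed source string, and (ii) the source string depends on only $\Theta(k\log_k n)$ of the unknown labels, with $x_v$ genuinely affecting it. Both requirements are delicate under TED because the horizontal promotion of siblings couples the position of $v$'s image to the survival of many other nodes; controlling this coupling --- and arranging the top-down reconstruction order so that the interfering labels are already known and the separation isolates $x_v$ --- is the technical heart of the argument. The complex-analytic input, by contrast, is essentially black-box once the subtrace has been shown to be a bona fide string trace.
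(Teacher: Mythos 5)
Your high-level plan (extract a ``subtrace,'' apply mean-based complex-analytic machinery) is the same one the paper uses, but the step you yourself flag as the technical heart --- that a string read off a TED trace by a canonical descent is distributed exactly as an i.i.d.\ deletion trace of a fixed source string with deletion probability $q$ --- is not established, and it is false as stated. Under TED a deleted node's children are promoted sideways, so a root-to-leaf path of the trace weaves between different original subtrees, and the relevant ``horizontal'' displacement at the digit corresponding to level $\ell-m$ is governed not by $q$ but by the probability $p_{d-\ell+m}$ that an entire sibling subtree contributes no surviving root-to-leaf path, where $p_{h+1}=q+(1-q)p_h^k$ with $p_0=q$. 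The paper's subtrace is accordingly a \emph{tree} (the union of all surviving root-to-leaf paths of full length $d$, padded back to a complete $k$-ary tree), which preserves the level of every surviving node, and each level is analyzed as a whole via a multivariate generating function with one complex variable per digit of the base-$k$ address; the per-digit displacements are then independent and binomial with parameters $p_{d-\ell+m}$, which is what makes the expectation computation (Lemma~\ref{genfun}) go through. Your univariate reduction also creates a circularity in the top-down, node-by-node union bound: the mean at position $j$ of level $\ell$ depends on the labels of all nodes $i$ with $i_m\ge j_m$ in every digit, i.e.\ on nodes not yet reconstructed in a left-to-right order. The paper instead runs a tournament over all $2^n$ labelings, locating the first level at which two labelings differ and applying pigeonhole over the $k^{\ell_*}$ positions there.

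Second, the separation $\exp(-\Theta(k\log_k n))$ you claim cannot hold for all $q<1$ and $k\ge 2$ by this route, so the proposal cannot prove the theorem as stated. When $q\ge k/(k+1)$ the recursion drives $p_h\to 1$, the subtrace is empty with high probability, and the factors $1/(1-p_{d-\ell+m})$ entering the change of variables $w_m=(z_m-p)/(1-p)$ blow up, costing $\exp(\Theta(k d^2))$ rather than $\exp(\Theta(kd))$. This is precisely why the paper recovers Theorem~\ref{orig_thm} only under the additional hypotheses $q<c/(c+1)$ and $k>c$ (Theorem~\ref{thm2}, via Lemma~\ref{rec}), while the unrestricted statement is proved combinatorially in~\cite{DRRAAP}; any correct writeup along your lines would have to surface such a restriction. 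A smaller point: the Borwein--Erd\'elyi arc estimates you invoke are unnecessary here, since each variable appears with degree at most $k-1$ and there are only $\ell_*\le\log_k n$ variables, so the elementary full-torus Littlewood bound $\sup_{z\in(\partial\mathbb{D})^{\ell_*}}|B(z)|\ge 1$ (Lemma~\ref{lwd}) already suffices.
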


In particular, note that the sample complexity is polynomial in $n$ whenever $k$ is a constant. 
Our first result is an alternative proof of the same result, under some mild additional assumptions, as stated below in Theorem~\ref{thm2}. 

\begin{theorem}
\label{thm2}
Fix $c \in \mathbb{Z}^+$ and let $q < \frac{c}{c+1}$.
There exists a finite constant $C$, depending only on $c$ and $q$,  
such that for any $k > c$ the following holds: in the TED model, $\exp(C k\log_k n)$ traces suffice to reconstruct a complete $k$-ary tree on $n$ nodes with high probability. 
\end{theorem}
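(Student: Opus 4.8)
The plan is to reduce tree reconstruction to a family of string trace reconstruction problems—one per node—and to solve each by the mean-based complex-analytic method, with the reduction carried out through the notion of a subtrace. Write $L = \log_k n$ for the depth of the tree and $p = 1-q$ for the retention probability. Fix a target node $v$ at depth $d \le L$ and let $u_0 = \mathrm{root}, u_1, \dots, u_d = v$ be its ancestral path. The starting point is that in the TED model a deletion contracts the edge from a node to its parent, so deletions preserve the ancestor relation: the surviving nodes of any root-to-node path in $X$ reappear in each trace as an ancestor chain descending from the root, carrying their labels in their original order. Hence the restriction of a trace to the images of $u_0, \dots, u_d$—the \emph{subtrace} along this path—has exactly the law of the classical binary string deletion channel with deletion probability $q$ applied to the label string $(x_{u_1}, \dots, x_{u_d})$. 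This is the device that converts the tree topology into a clean one-dimensional problem.

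Next I would set up the mean-based statistics. For the subtrace I record, for each position $j$, the probability $\mu_j^{(v)}$ that its $j$-th symbol equals $1$, and form the generating function $\sum_j \mu_j^{(v)} w^j$. The same computation as in the string setting shows this equals an explicit analytic prefactor times a polynomial whose coefficients are linear in the unknown labels; consequently, if two label assignments differ along the path, the difference of their generating functions is a nonzero polynomial $A(z) = \sum_i a_i z^i$ with coefficients $a_i \in \{-1, 0, 1\}$. To pin down $v$'s label the algorithm reconstructs top-down, so that the labels of $u_1, \dots, u_{d-1}$ are already known and only $x_v$ is at issue; identifying the correct ancestor chain in each trace moreover requires the left-to-right sibling data along the path, which I expect to inflate the effective length of the relevant string—and hence the degree of $A$—to $O(kL)$. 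This $O(kL)$ degree is precisely where the branching factor enters the exponent.

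For the separation step I invoke complex analysis, but only in its crude form. Because the target budget $\exp(C k \log_k n)$ is polynomial in $n$ for fixed $k$, I do not need the sharp Borwein--Erd\'elyi estimate on a short arc used to obtain the $\exp(O(m^{1/3}))$ string bounds; evaluating $A$ on the positive real axis near $z=1$ already separates distinct label strings by at least $\exp(-O(kL))$. The hypothesis enters here: $q < \tfrac{c}{c+1}$ together with $k > c$ forces $p > \tfrac{1}{c+1}$ and $k(1-q) > 1$, i.e.\ the surviving tree is supercritical, so that the full path of $v$ survives with probability at least $p^{L} \ge \exp(-O(L))$ and, more importantly, $v$'s image can be located by depth and sibling order with controlled failure probability. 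Combining the $\exp(-O(kL))$ statistical separation with the survival probability and applying a Hoeffding/Chernoff bound to the empirical statistics shows that $\exp(O(kL)) = \exp(O(k \log_k n))$ traces suffice to recover $\mu^{(v)}$ to within the separation, hence to determine $x_v$. A union bound over the $n$ nodes, absorbed by enlarging $C$, yields high-probability reconstruction of the whole tree.

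The step I expect to be the main obstacle is making the subtrace operational: the algorithm does not know which trace node is the image of $v$, and at each branch point the children of a surviving node are interleaved with grandchildren promoted from deleted children, so the raw children-sequence is not itself a clean deletion-channel output. The work is to show that, given the already-reconstructed ancestors, one can align each trace to the correct chain and that the conditional law of the extracted subtrace is (or stochastically dominates) the desired string-channel law, while bounding the probability of misalignment below the separation threshold within the stated sample budget. The supercriticality guaranteed by $q < \tfrac{c}{c+1}$ and $k > c$ is exactly what keeps these alignment error terms under control.
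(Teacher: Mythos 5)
There is a genuine gap, and it is the one you flag yourself in your last paragraph: the reduction to a string deletion channel along the ancestral path of $v$ is never made operational, and this is the heart of the matter rather than a technical loose end. The object you call the subtrace along the path $u_0,\dots,u_d$ is not observable: a trace does not come with an identification of which of its nodes are the images of $u_1,\dots,u_d$, and in the TED model promotions interleave the children of deleted nodes among the surviving siblings, so the child index to follow at each level shifts in a position-dependent way. The paper never identifies the path at all. Instead it works with an observable object --- the subgraph of the trace consisting of all root-to-leaf paths of full length $d$, padded back to a complete $k$-ary tree --- and, for each level $\ell$, computes exactly the probability that position $j=j_{\ell-1}\dots j_0$ (in base $k$) of this augmented subtrace carries the original bit from position $i=i_{\ell-1}\dots i_0$; this probability factors digit-by-digit into binomial terms, which is what makes the expectation of a \emph{multivariate} generating function (one complex variable per base-$k$ digit, each of degree at most $k-1$) equal to a product $\prod_m((1-p_{d-\ell+m})w_m+p_{d-\ell+m})^{t_m}$ against the unknown labels. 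A multivariate Littlewood bound on the torus then gives the $\exp(-O(k\ell))$ separation. Nothing in your writeup supplies a substitute for this displacement computation, and your proposed fix (``align each trace to the correct chain given already-reconstructed ancestors'') would additionally have to control error propagation in a top-down scheme, which you also do not address.

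Some of your surrounding reasoning is sound and matches the paper in spirit: the hypothesis $q<\frac{c}{c+1}$, $k>c$ is indeed used to guarantee supercriticality --- in the paper's language, the probability $p_h$ that a height-$h$ subtree has no surviving root-to-leaf path satisfies $p_h\le p'<1$ uniformly --- and only a crude polynomial lower bound (sup at least $1$ on the unit torus for a nonzero $\{-1,0,1\}$-coefficient polynomial) is needed, not a Borwein--Erd\'elyi arc estimate, because the per-variable degree is only $k-1$. (As a side point, ``the positive real axis near $z=1$'' is the wrong place to evaluate --- a polynomial such as $1-z$ is arbitrarily small there; the elementary bound lives on the unit circle, or at a real point such as $z=1/2$ bounded away from $1$.) The concluding Chernoff-plus-union-bound step is also essentially the paper's. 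But without the subtrace survival/displacement analysis, the statistic whose expectation gap you invoke is not defined, so the proof does not go through as written.
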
 

The additional assumptions compared to Theorem~\ref{orig_thm} are indeed mild. 
For instance, with $c = 1$ in the theorem above,  Theorem~\ref{orig_thm} is recovered for $q < 1/2$. Theorem~\ref{thm2} also allows $q$ to be arbitrarily close to $1$, provided that $k$ is at least a large enough constant. 

In \cite{DRRAAP}, the authors use combinatorial techniques to prove Theorem~\ref{orig_thm}. Our proof of Theorem~\ref{thm2} uses a mean-based complex analytic approach, similar to~\cite{DeOdonnellServedio17-WorseCase,de2019optimal,NazarovPeres17-WorseCase,KMMP19}. The advantage of our approach is that it allows us to reconstruct labels of more general tree topologies in the TED deletion model, as stated below in Theorem~\ref{tedgen}; the combinatorial proof in~\cite{DRRAAP} does not naturally lend itself to such a generalization. 

\begin{theorem}
\label{tedgen}
Let $X$ be a rooted tree on $n$ nodes with binary labels, with nodes on level $\ell$ all having the same number of children $k_{\ell}$. 
Let $k_{\max} := \max_{\ell} k_{\ell}$ and $k_{\min} := \min_{\ell} k_{\ell}$, 
where the minimum goes over all levels except the last one (containing leaf nodes). 
If $q < \frac{c}{c+1}$ and $k_{\min} > c$ for some $c \in \mathbb{Z}^+$, then there exists a finite constant $C$, depending only on $c$ and $q$, such that 
%$\exp( C k_{\max} d ) \leq \exp(Ck_{\max} \log_{k_{\min}}n)$ 
$\exp(Ck_{\max} \log_{k_{\min}}n)$ 
traces suffice to reconstruct $X$ with high probability. %, 
%where $d$ is the depth of $X$. 
\end{theorem}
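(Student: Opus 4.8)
The plan is to run the mean-based complex-analytic scheme of Theorem~\ref{thm2} while bookkeeping its two separate sources of cost --- the complex-analytic gap coming from a single root-to-leaf path, and the probability of cleanly reading that path off the tree trace --- and tracking the level-dependent branching $k_\ell$ throughout. Since the deletion channel acts independently of the labels, it suffices to distinguish any two labelings $X,X'$ of the fixed topology; the hardest instances are those for which $X$ and $X'$ differ throughout an entire root-to-$v$ path for some node $v$ at depth $\ell\le L$, where $L\le\log_{k_{\min}}n+O(1)$ bounds the depth of the tree (each internal level multiplies the node count by at least $k_{\min}$). The key device is the subtrace associated with $v$: conditioned on a favorable, locally recognizable survival configuration of the subtree around the root-to-$v$ path, I would define a string read off the trace that is distributed exactly as the root-to-$v$ label string $x_1\cdots x_\ell$ passed through an ordinary one-dimensional deletion channel with parameter $q$. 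The whole difficulty is hidden in the word ``conditioned'': the event on which this clean description holds has probability exponentially small in the depth, and controlling that probability is where $k_{\max}$ enters.

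On the clean subtrace the argument is purely that of string trace reconstruction. Form the difference polynomial $P(w)=\sum_{j=1}^{\ell}(x_j-x'_j)w^j$, which has coefficients in $\{-1,0,1\}$ and is not identically zero because $X\neq X'$. As in Nazarov--Peres and De--O'Donnell--Servedio, the generating function $\sum_j(\E_X[s_j]-\E_{X'}[s_j])\,w^j$ of the single-coordinate subtrace statistics $s_j$ (each an indicator of the recognizable configuration times a clean coordinate bit) equals, up to normalization, $P$ composed with the M\"obius map $w\mapsto(1-q)w/(1-qw)$ induced by the channel; a lower bound on $\sup|P|$ over the corresponding arc of the unit circle, together with the standard passage from sup-norm on an arc to Taylor coefficients, shows that on the clean subtrace some coordinate separates $X$ from $X'$ by at least $\exp(-c_0 L)$. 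Because the conditioning event has label-independent probability, the separation in the unconditional statistics is this quantity times that probability, giving at least $\exp(-Ck_{\max}L)=\exp(-Ck_{\max}\log_{k_{\min}}n)$; by Hoeffding's inequality this many traces distinguish $X$ from $X'$ with failure probability $o(1/n)$, and a union bound over the $n$ nodes recovers all labels with high probability.

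The assumptions and the main obstacle both live in the conditioning event. The hypothesis $q<\tfrac{c}{c+1}$ together with $k_{\min}>c$ is exactly the supercriticality condition $k_{\min}(1-q)>1$: in the TED model each internal node has expected surviving-children count at least $k_{\min}(1-q)>1$, so the survival process on subtrees is supercritical, and every node on the root-to-$v$ path has, with probability bounded away from $0$ uniformly in $n$, a surviving and locally recognizable configuration of children. This is what allows the favorable event to be built level by level with per-level probability $\exp(-\Theta(k_\ell))\ge\exp(-\Theta(k_{\max}))$, which compounds over the at most $L$ levels into the overall extraction probability $\exp(-\Theta(k_{\max}L))$ used above; thus the worst-case branching $k_{\max}$ controls the per-level cost while $k_{\min}$ controls the depth. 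The hardest part of the argument is precisely the design of this conditioning event: it must be strong enough that, given it, the subtrace is genuinely the clean deletion-channel image of the path --- so that the lower bound on $P$ applies and off-path promoted children do not corrupt the coordinates we read --- yet likely enough that its probability does not drop below $\exp(-Ck_{\max}L)$. Verifying that such an event exists simultaneously at every level, uniformly over the level-dependent $k_\ell$, is the one genuinely new ingredient beyond the complete $k$-ary case of Theorem~\ref{thm2}, and I expect it to be the crux of the proof.
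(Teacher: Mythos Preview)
Your plan is genuinely different from the paper's and carries a real gap that you yourself flag but do not close.

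The paper does \emph{not} reduce to one-dimensional string trace reconstruction. It runs the proof of Theorem~\ref{thm2} essentially verbatim on the (augmented) subtrace: the recursion becomes $p_{h+1}=q+(1-q)p_h^{\,k_{d-h-1}}$, the analog of Lemma~\ref{rec} gives $p_h\le p'<1$ under $q<\tfrac{c}{c+1}$ and $k_{\min}>c$, and the level-$\ell$ generating function $A_\ell(w)$ is the same multivariate object as before, now with the $m$th sum ranging over $\{0,\dots,k_{\ell-1-m}-1\}$. Lemma~\ref{genfun} goes through unchanged in form, Lemma~\ref{lwd} is applied with $z_m$ ranging over the \emph{whole} unit circle, and the only place the level-dependent arities enter the final estimate is the bound $\prod_m |w_m^*|^{t_m}\le (2/(1-p'))^{k_{\max}\ell_*}$ together with $d\le \log_{k_{\min}}n$. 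There is no conditioning event and no univariate M\"obius/arc argument; the factorization over levels is what makes the multivariate Littlewood bound directly applicable.

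Your route instead tries to isolate a single root-to-$v$ path and condition on an event $E$ under which the labels read along that path in the trace are distributed exactly as the image of $x_1\cdots x_\ell$ under the string deletion channel. The obstacle is structural, not merely technical: in the TED model, whenever an on-path node $u_j$ is deleted, all of $u_j$'s off-path children are promoted to the same parent as $u_{j+1}$ and are interleaved with it in the left-to-right order. Any event that lets you unambiguously follow the path after such a deletion must constrain those off-path subtrees at \emph{that} level --- but the relevant level depends on which $u_j$ was deleted, so any ``locally recognizable'' $E$ that is simultaneously (i) independent of the on-path deletions and (ii) sufficient to pick out $u_{j+1}$ after every possible pattern of on-path deletions is hard to manufacture. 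If instead you retreat to the paper's subtrace, the difficulty disappears for the opposite reason: a node survives in the subtrace only if \emph{all} its ancestors survive, so there is no deletion channel acting on the path at all, and your univariate $P(w)$ never enters. Either way the reduction to string trace reconstruction, which you identify as the crux, does not go through as sketched; the paper's multivariate approach sidesteps the issue entirely. (Two smaller points: the claim that the ``hardest instances'' have $X,X'$ differing along an entire path is unmotivated and unnecessary --- a single differing node already gives a nonzero $P$ --- and the supercriticality heuristic $k_{\min}(1-q)>1$ is the right intuition for the subtrace survival bound $p_h\le p'<1$, but it does not by itself furnish the recognizability event you need.)
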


Furthermore, with some slight modifications, our proof of Theorem~\ref{thm2} also provides a sample complexity bound for reconstructing arbitrary tree topologies in the AON deletion model. 

\begin{theorem}
\label{aon} 
Let $X$ be a rooted tree on $n$ nodes with binary labels, let $k_{\max}$ denote the maximum number of children a node has in $X$, and let $d$ be the depth of $X$. 
In the AON model, there exists a finite constant $C$ depending only on $q$ such that  $\exp(C k_{\max} d)$ traces suffice to reconstruct $X$ with high probability. 
\end{theorem}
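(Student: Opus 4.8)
The plan is to reconstruct the labels one node at a time: since the tree topology is known, it suffices to recover the label $x_v$ of each fixed node $v$ with error probability $o(1/n)$ and then take a union bound over the $n$ nodes. Fix a target $v$ at depth $\ell \le d$, and let $(u_0=\text{root},u_1,\dots,u_\ell=v)$ be the root-to-$v$ path, where $u_j$ is the $a_j$-th child of $u_{j-1}$. The feature of the AON model that I would exploit is that survival is \emph{monotone and independent along the path}: a node survives iff it and all of its ancestors are unmarked, so the surviving set is a rooted subtree, and conditioned on the whole chain $u_1,\dots,u_\ell$ being unmarked the siblings encountered at distinct levels survive independently, each with probability $1-q$. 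This removes the ``re-parenting'' complication of the TED model and makes the alignment purely horizontal.

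For a trace $T$ and a position vector $\vec p=(p_1,\dots,p_\ell)$ define the navigation statistic $b(T;\vec p)\in\{0,1\}$ obtained by starting at the root and, for $j=1,\dots,\ell$, descending to the $p_j$-th surviving child in left-to-right order, returning the label of the node so reached (and $0$ if at some step that child does not exist). This is the tree analogue of reading a fixed coordinate of a string trace, and it is exactly the ``subtrace'' data the mean-based method needs. Using the survival structure above, I would establish the factorized identity
\begin{equation*}
\E[b(T;\vec p)]=(1-q)^\ell\sum_{u}x_u\prod_{j=1}^\ell\binom{a_j(u)-1}{p_j-1}(1-q)^{p_j-1}q^{a_j(u)-p_j},
\end{equation*}
the sum being over depth-$\ell$ nodes $u$ with child-index sequence $a_1(u),\dots,a_\ell(u)$. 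Forming the generating sum $\Phi(\vec w):=\sum_{\vec p}\big(\prod_j w_j^{p_j}\big)\E[b(T;\vec p)]$ and summing each geometric-binomial factor gives
\begin{equation*}
\Phi(\vec w)=(1-q)^\ell\sum_{u}x_u\prod_{j=1}^\ell\phi_{a_j(u)}(w_j),\qquad \phi_a(w):=w\big(q+(1-q)w\big)^{a-1}.
\end{equation*}

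The key structural point is that the polynomials $\phi_1,\dots,\phi_{k_{\max}}$ have distinct degrees and are therefore linearly independent, so there is a dual family of linear functionals $\psi_1,\dots,\psi_{k_{\max}}$ with $\psi_b(\phi_a)=\mathbf 1[a=b]$. Applying the tensor functional $\bigotimes_{j=1}^\ell\psi_{a_j}$ to $\Phi$ annihilates every node except $v$ (the unique depth-$\ell$ node whose index sequence is $(a_1,\dots,a_\ell)$), yielding the exact identity $x_v=(1-q)^{-\ell}\big(\bigotimes_j\psi_{a_j}\big)(\Phi)$. Since each $\psi_b$ is a fixed linear combination of coefficient-extractions, this unwinds to an explicit estimator $x_v=(1-q)^{-\ell}\sum_{\vec p}\big(\prod_j\beta_{a_j,p_j}\big)\E[b(T;\vec p)]$ for known constants $\beta_{b,p}$; I would estimate each mean $\E[b(T;\vec p)]$ by its empirical average over the traces and round to $\{0,1\}$.

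The sample complexity is then governed by the $\ell_1$-sensitivity $(1-q)^{-\ell}\prod_{j}\sum_p|\beta_{a_j,p}|$ of this estimator: if each dual functional satisfies $\sum_p|\beta_{b,p}|\le e^{O(k_{\max})}$, then the sensitivity is at most $e^{O(k_{\max})\cdot\ell}\cdot(1-q)^{-\ell}\le\exp(Ck_{\max}d)$, so estimating every mean to additive accuracy $\exp(-Ck_{\max}d)$ (which needs $\exp(Ck_{\max}d)$ traces, the $\le n k_{\max}^{d}$ union-bound factors being subdominant) recovers all labels. The main obstacle I anticipate is precisely the bound $\sum_p|\beta_{b,p}|\le e^{O(k_{\max})}$, i.e.\ controlling the inverse of the (triangular) change-of-basis matrix from $\{\phi_a\}$ to monomials uniformly in $b\le k_{\max}$ and in $q$. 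This is where the complex-analytic, Nazarov--Peres-style input enters: I would represent $\psi_b$ as a contour integral $\psi_b(f)=\frac{1}{2\pi i}\oint f(w)R_b(w)\,dw$ against an explicit rational kernel $R_b$ built from the $\phi_a$, and bound $\sum_p|\beta_{b,p}|$ by optimizing $|R_b|$ over a suitable circle, so that a constant depending only on $q$ appears in the exponent. Getting this constant uniform across all $d$ levels, so that the per-level factors multiply to the stated $\exp(Ck_{\max}d)$ bound, is the crux of the argument.
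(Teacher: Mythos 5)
Your proposal is correct, but it takes a genuinely different route from the paper. The paper works with one generating function per level whose coefficients are the (augmented) trace labels, derives the same factorized expectation you do, and then \emph{distinguishes pairs of candidate labelings}: for two labelings differing at level $\ell_*$ the difference polynomial has $\{-1,0,1\}$ coefficients, a multivariate Littlewood-type lemma (via the maximum modulus principle) gives a point on the torus where its modulus is at least $1$, and pulling this back through $w_m^* = (z_m^*-q)/(1-q)$ plus a pigeonhole step shows some coordinate of the expected trace separates the two labelings by $\exp(-Ck_{\max}\ell_*)$; reconstruction is then a tournament over all $2^n$ labelings with a Chernoff/union bound, and arbitrary topologies are handled by augmenting everything to a complete $k_{\max}$-ary tree. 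You instead build an \emph{exact unbiased linear estimator of each individual label} by inverting the map $x \mapsto \E[b(T;\cdot)]$ via the dual basis to $\{\phi_a\}$, which buys a per-node, computationally efficient estimator and handles arbitrary topologies directly, at the price of needing the $\ell_1$-norm bound on the dual coefficients. The one obstacle you flag as the crux is in fact immediate and needs no contour integration: since $\phi_a(w)=w\,z^{a-1}$ with $z:=q+(1-q)w$, writing $w^{b}=w\left(\frac{z-q}{1-q}\right)^{b-1}$ and expanding gives the explicit inverse
\begin{equation*}
w^{b}=(1-q)^{-(b-1)}\sum_{j=0}^{b-1}\binom{b-1}{j}(-q)^{b-1-j}\,\phi_{j+1}(w),
\qquad\text{hence}\qquad
\beta_{b,p}=\binom{p-1}{b-1}\frac{(-q)^{p-b}}{(1-q)^{p-1}},
\end{equation*}
so $\sum_{p\le k_{\max}}|\beta_{b,p}|\le k_{\max}2^{k_{\max}}(1-q)^{-(k_{\max}-1)}\max\left(1,\tfrac{q}{1-q}\right)^{k_{\max}}\le e^{C_q k_{\max}}$ with $C_q$ depending only on $q$, exactly the uniform-in-level bound you need; the rest of your sensitivity and union-bound accounting then goes through and yields $\exp(Ck_{\max}d)$ as claimed. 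Note that your inversion exploits the clean product structure specific to the AON model (survival of a sibling depends only on its own mark), which is why the paper's torus/Littlewood argument, though less explicit here, is the one that also extends to the TED setting.
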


The key idea in the above proofs is the notion of a \textit{subtrace}, which is the subgraph of a trace that consists only of root-to-leaf paths of length $d$, where $d$ is the depth of the underlying tree. In the proofs of Theorems~\ref{thm2} and~\ref{tedgen} we essentially only use the information contained in these subtraces and ignore the rest of the trace. 
This trick is key to making the setup amenable to the mean-based complex analytic techniques. 

The rest of the paper follows the following outline. We start with some preliminaries in Section~\ref{sec:prelims}, where we state basic tree definitions and define the notion of a subtrace more precisely. In Section~\ref{sec:thm2proof} we present our proof of Theorem~\ref{thm2}. In Section~\ref{sec:general_proofs} we generalize the methods of Section~\ref{sec:thm2proof} to a broader class of tree topologies and deletion models, proving Theorems~\ref{tedgen} and~\ref{aon}. 
%We end with some concluding remarks in Section~\ref{sec:conclusion}. 
We conclude in Section~\ref{sec:conclusion}. 

\section{Preliminaries}\label{sec:prelims}

In what follows, $X$ denotes an underlying rooted tree of known topology along with binary labels associated with the $n$ non-root nodes of the tree.

%\subsection{Basic tree terminology}

\textbf{Basic tree terminology.} 
A \textit{tree} is an acyclic graph. A rooted tree has a special node that is designated as the \textit{root}. A \textit{leaf} is a node of degree 1. We say that a node $v$ is at \textit{level} $\ell$ if the graph distance between $v$ and the root is $\ell$. We say that node $v$ is at \textit{height} $h$ if the largest graph distance from $v$ to a leaf is $h$. \textit{Depth} is the largest distance from the root to a leaf. We say that node $u$ is a \textit{child} of node $v$ if there is an edge between $u$ and $v$ and $v$ is closer to the root than~$u$ in graph distance. Similarly, we also call $v$ the \textit{parent} of $u$. More generally, $v$ is an \textit{ancestor} of~$u$ if there exists a path $v = x_0$, $\ldots$, $x_n = u$ such that $x_i$ is closer to the root than $x_{i+1}$ for every $i \in \{0,1,\ldots, n-1\}$. A \textit{complete $k$-ary tree} is a tree in which every non-leaf node has $k$ children. 

%\subsection{Subtrace augmentation}

\textbf{Subtrace augmentation.} 
Above we defined the subtrace $Z$ as the subgraph of the trace $Y$ containing all root-to-leaf paths of length $d$, where $d$ is the depth of $X$. In what follows, it will be helpful to slightly modify the definition of the subtrace by augmenting $Z$ to $Z'$ such that $Z'$ is a complete $k$-ary tree that contains $Z$ as a subgraph. Given $Z$, we construct $Z'$ recursively as follows. We begin by setting $Z' := Z$. If the root of $Z'$ currently has fewer than $k$ children, then add more child nodes to the root to the right of the existing children and label them 0. Now, consider the leftmost node in level 1 of $Z'$. If it has fewer than $k$ children, add new children to the right of the existing children of this node and label them 0. Then repeat the same procedure for the second leftmost node in level 1. Continue this procedure left to right for each level, moving from top to bottom of the tree. See Figure \ref{fig:eg_subtrace} for an illustration of this process. In Section~\ref{sec:thm2proof}, when we mention the subtrace of $X$ we mean the augmented subtrace, constructed as described here. In Section~\ref{sec:general_proofs}, we will slightly modify the notion of an augmented subtrace for the different tree topologies we will be considering.  

\begin{figure}[H]
     \centering
     \begin{subfigure}[b]{0.48\textwidth}
         \centering
         \includegraphics[width=\textwidth]{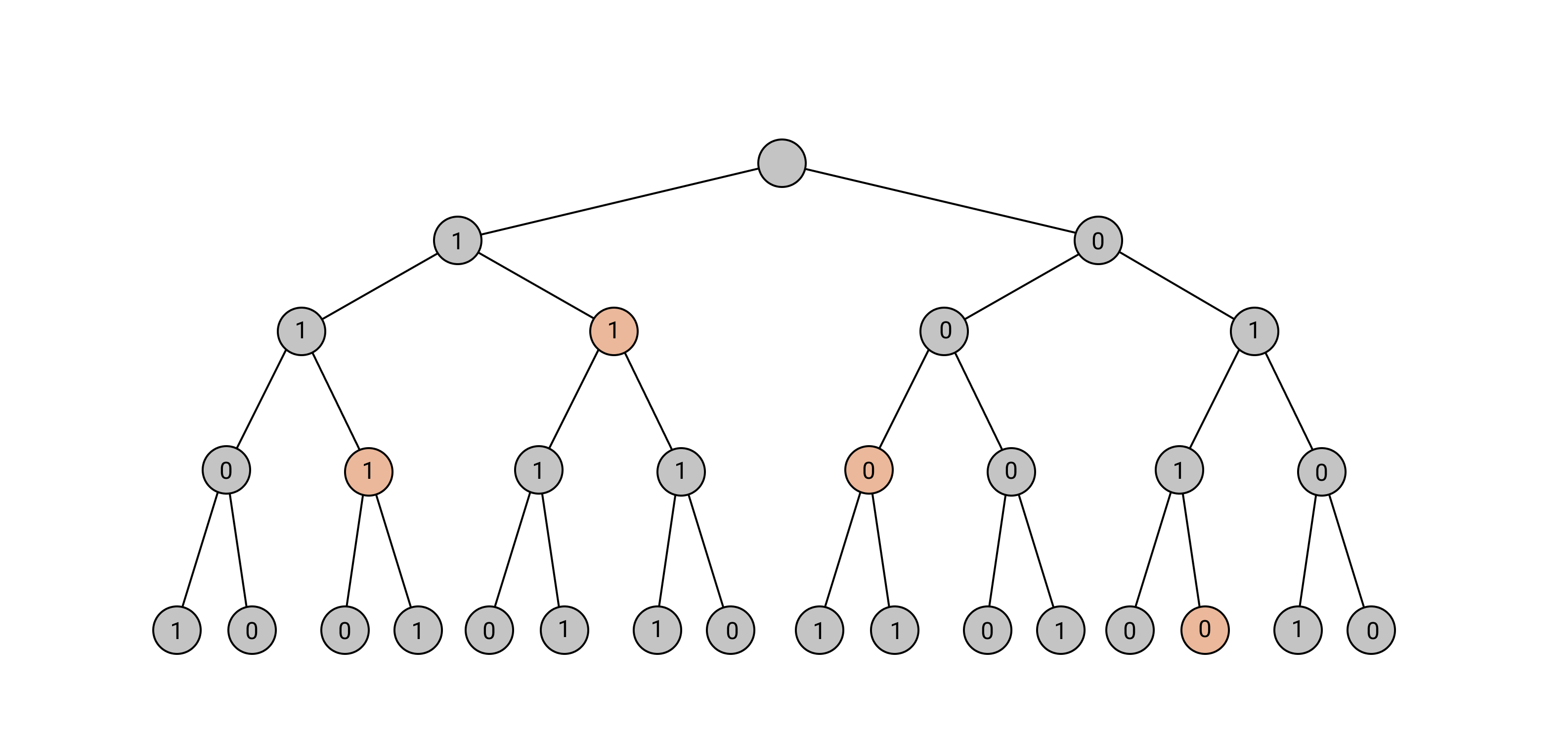}
         \caption{Original tree.}
         \label{fig:orig_big}
     \end{subfigure}
     %\hfill
     \begin{subfigure}[b]{0.48\textwidth}
         \centering
         \includegraphics[width=\textwidth]{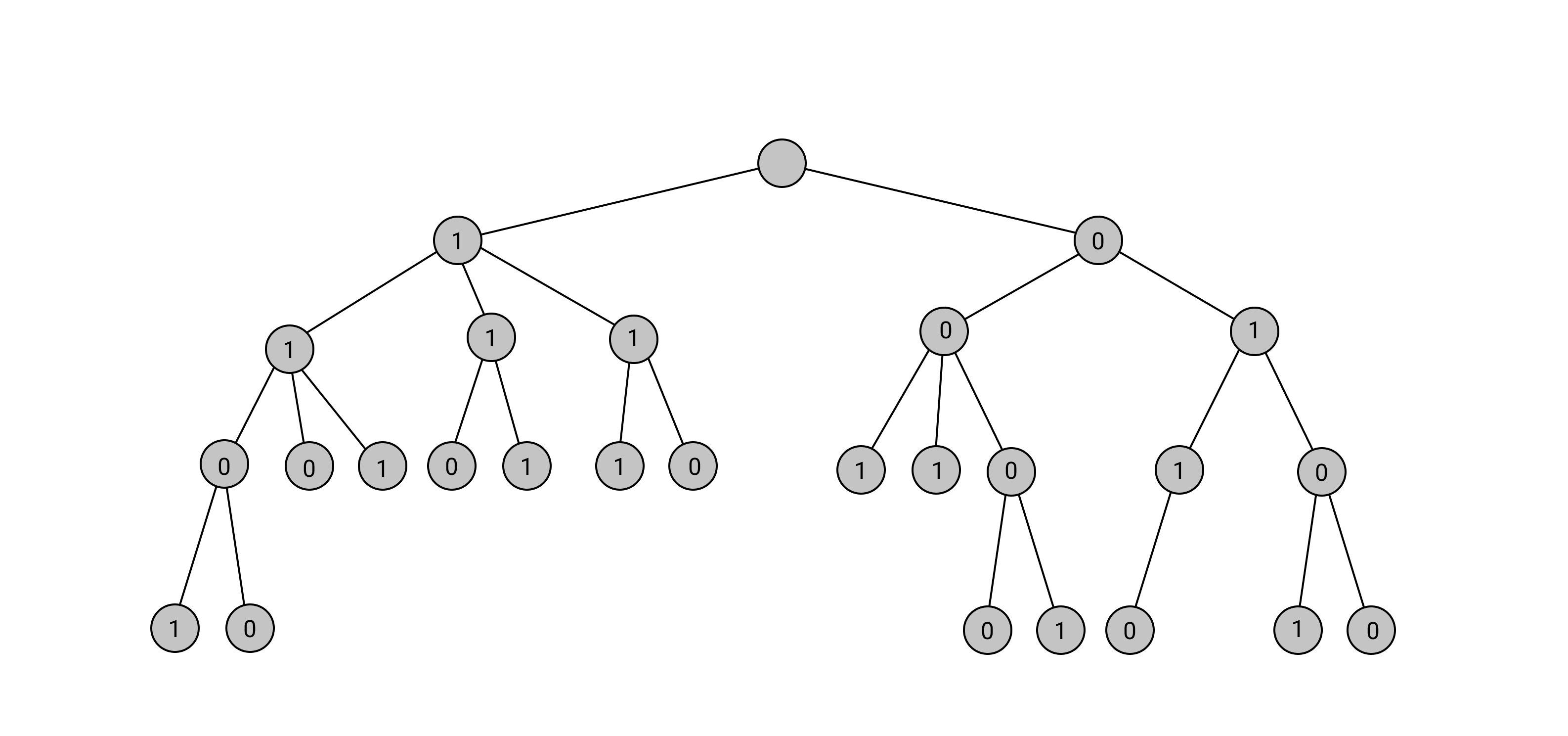}
         \caption{TED trace.}
         \label{fig:ted}
     \end{subfigure}
     %\hfill
     \begin{subfigure}[b]{0.48\textwidth}
         \centering
         \includegraphics[width=\textwidth]{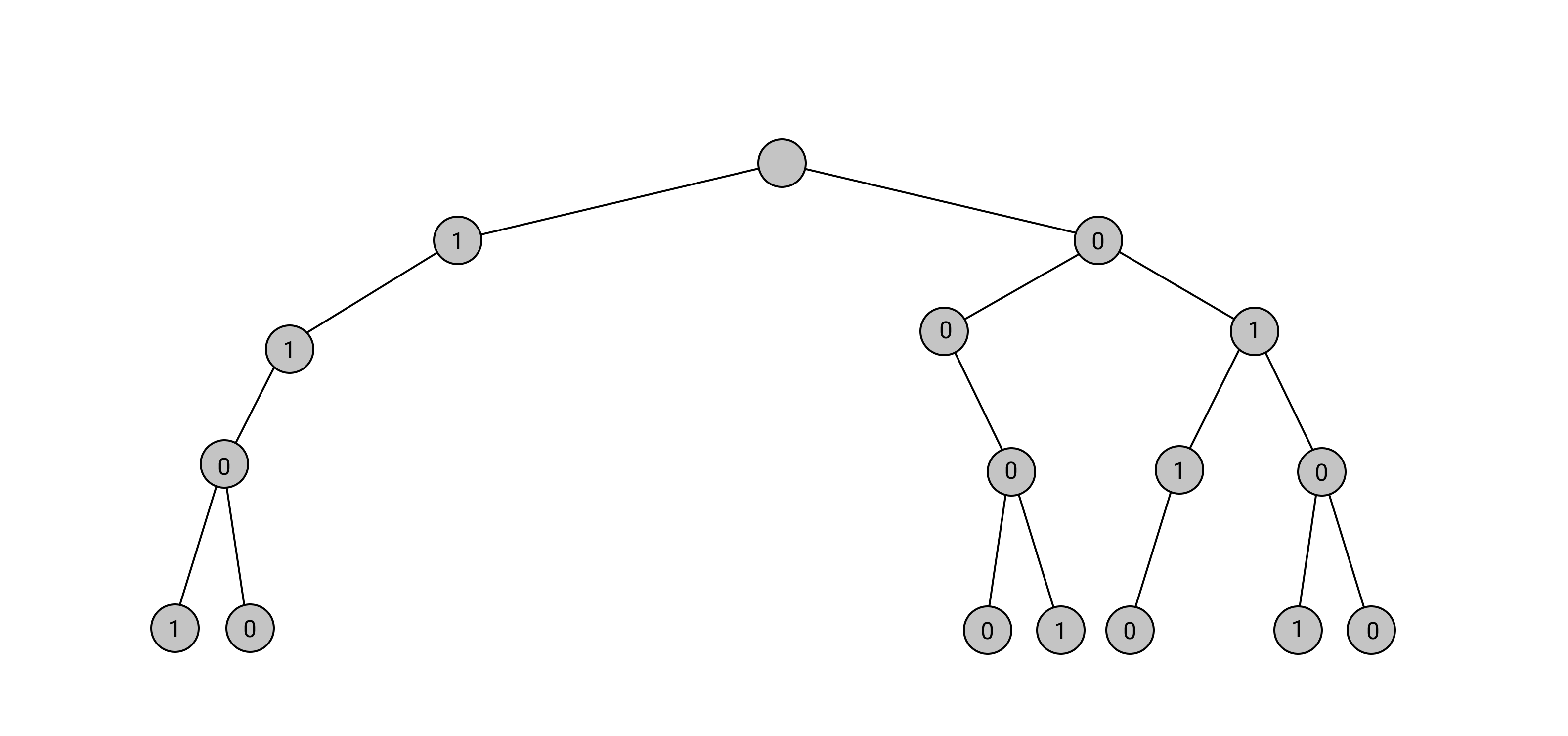}
         \caption{Subtrace.}
         \label{fig:sub}
     \end{subfigure}
     \begin{subfigure}[b]{0.48\textwidth}
         \centering
         \includegraphics[width=\textwidth]{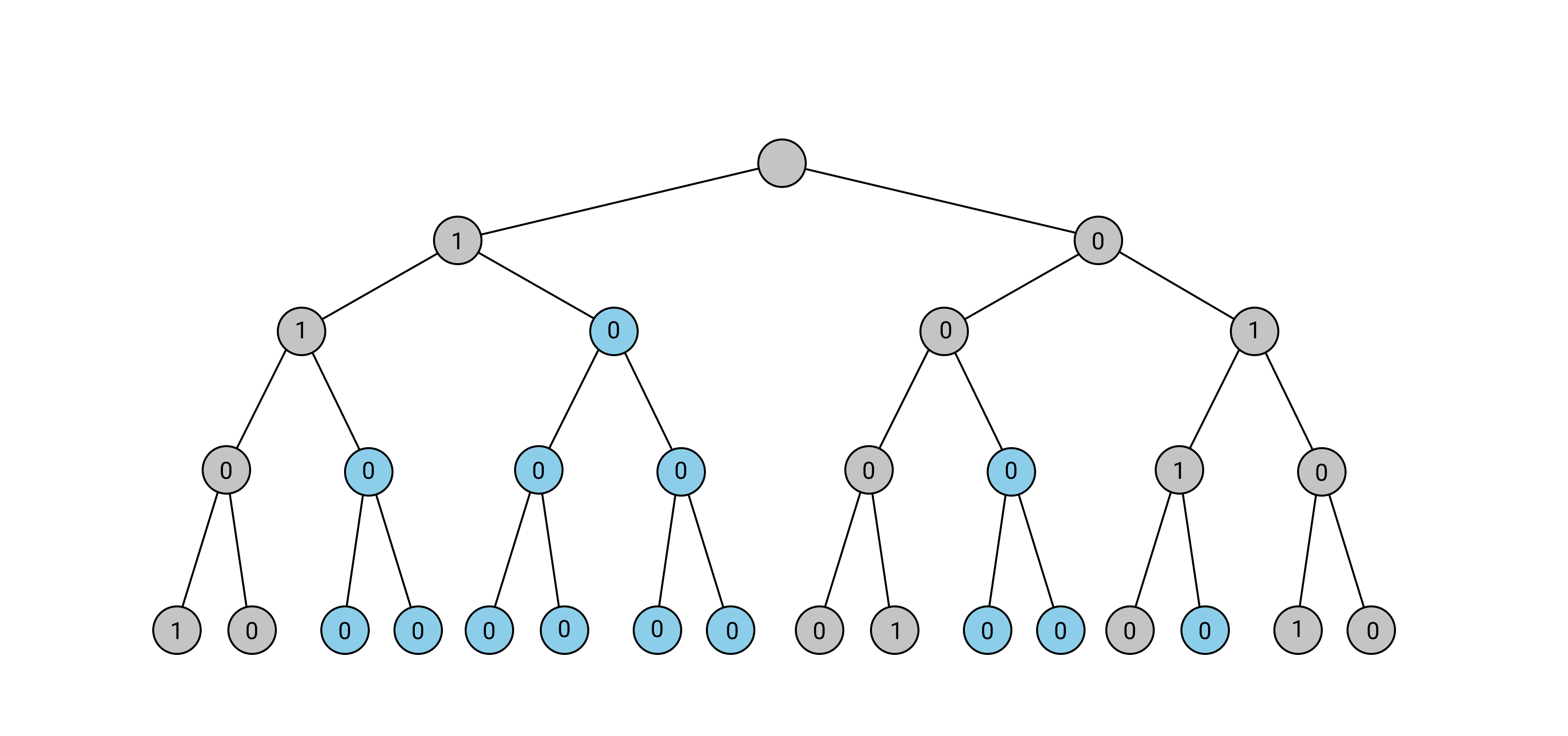}
         \caption{Augmented subtrace.}
         \label{fig:aug_sub}
     \end{subfigure}
        \caption{Construction of an augmented subtrace. Original tree in (a), with orange nodes to be deleted. The resulting trace under the TED deletion model in (b). Subtrace in (c). Augmented subtrace in (d), with blue nodes corresponding to the padding 0s.}
        \label{fig:eg_subtrace}
\end{figure}

\section{Reconstructing complete \texorpdfstring{$k$}{k}-ary trees in the TED model} \label{sec:thm2proof}

In this section we prove Theorem~\ref{thm2}. The proof takes inspiration from~\cite{DeOdonnellServedio17-WorseCase,de2019optimal,NazarovPeres17-WorseCase,KMMP19}. We begin by computing, for every node in the original tree, its probability of survival in a subtrace. We then derive a multivariate complex generating function for every level $\ell$, with random coefficients corresponding to the labels of nodes in the subtrace. Finally, we show how we can ``average'' the subtraces to determine the correct labeling for each level of the original tree with high probability. 

\subsection{Computing the probability of node survival in a subtrace} 

Let $d$ denote the depth of the original tree $X$. 
Let $Y$ denote a trace of $X$ and let $Z$ denote the corresponding subtrace obtained from $Y$. 
Observe that a node $v$ at level $\ell$ of $X$ survives in the subtrace~$Z$ if and only if a root-to-leaf path that includes $v$ survives in the trace $Y$. Furthermore, there exists exactly one path from the root to $v$, which survives in $Y$ with probability $(1-q)^{\ell-1}$ (since each of the $\ell-1$ non-root ancestors of $v$ has to survive independently). 
Let $p_{d-\ell}$ denote the probability that no $v$-to-leaf path survives in $Y$. 
Thus, 
\[
\p \left(v \text{ survives in } Z \right) = (1-q)^{\ell-1}(1- p_{d-\ell}).
\]
Thus, we can see that it suffices to compute $p_h$ for $h \in \{ 0, 1, \ldots, d-1\}$ in order to compute the probability of survival of $v$ in a subtrace. The rest of this subsection is thus dedicated to understanding $\{p_h\}_{h=0}^{d-1}$. We will not find an explicit expression for $p_h$, but rather derive a recurrence relation for $p_h$, which will prove to be good enough for us.

Let us denote by $v$ a vertex at height $h$, which is the root of the subtree under consideration. There are two events that contribute to $p_h$. Either $v$ gets deleted (this happens with probability $q$) or all of the $k$ subtrees rooted at the children of $v$ do not have a surviving root-to-leaf path in the subtrace (this happens with probability $p_{h-1}^k$). Thus, we have the following recurrence relation: for every $h \geq 0$ we have that
\begin{equation}\label{eq:rec}
  p_{h+1} = q+(1-q)p_{h}^k; 
  %= 1 - (1-q)(1-p_{h}^k);  
\end{equation}
furthermore, the initial condition satisfies 
$p_0 = q$. 
This recursion allows to compute $\{ p_{h} \}_{h=0}^{d-1}$. 
We now prove the following statement about this recursion, which will be useful later on. 

\begin{lemma}
\label{rec}
Suppose that $0 < q < \frac{c}{c+1}$ and $k > c$ for some $c \in \mathbb{Z}^+$. 
There exists $p' < 1$, depending only on $c$ and $q$, such that $p_i \leq p' < 1$ for every $i \geq 0$.  
\end{lemma}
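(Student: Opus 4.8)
The plan is to view the recurrence \eqref{eq:rec} as the iteration of the map $f_k(x) := q + (1-q)x^k$ and to exhibit an explicit invariant interval $[0,p']$ with $p' < 1$ that traps the entire orbit started from $p_0 = q$. Two elementary monotonicity facts drive everything: for $x \in [0,1]$ the map $f_k$ is non-decreasing in $x$, and since $x^k$ is non-increasing in $k$ on $[0,1]$, the value $f_k(x)$ is non-increasing in $k$ as well. The second fact lets me reduce the whole family of recursions (over $k > c$) to the single worst case $k = c+1$: any upper bound valid for $k = c+1$ will automatically hold for all larger $k$.

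First I would produce the threshold $p'$. Set $\phi(x) := x - f_{c+1}(x) = x - q - (1-q)x^{c+1}$ and look for a point in $(q,1)$ where $\phi \geq 0$, equivalently where $f_{c+1}(x) \leq x$. I compute $\phi(1) = 0$ and $\phi'(1) = 1 - (1-q)(c+1)$. This is where the hypotheses enter: the assumption $q < \tfrac{c}{c+1}$ is exactly equivalent to $(1-q)(c+1) > 1$, i.e.\ to $\phi'(1) < 0$. Hence $\phi$ is strictly decreasing at $x = 1$, so $\phi(x) > 0$ for $x$ slightly below $1$; on the other hand $\phi(q) = -(1-q)q^{c+1} < 0$. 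By the intermediate value theorem $\phi$ has a root $p^* \in (q,1)$, and I may fix any $p' \in (p^*,1)$, which depends only on $c$ and $q$ and satisfies both $q < p'$ and $f_{c+1}(p') \le p'$.

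With $p'$ in hand the conclusion follows by induction on $i$. The base case is $p_0 = q \le p'$. For the inductive step, assuming $p_i \le p'$, I use monotonicity in $x$, then monotonicity in $k$ (valid since $0 < p' < 1$ and $k \ge c+1$), then the defining property of $p'$:
\[
p_{i+1} = q + (1-q)p_i^k \le q + (1-q)(p')^k \le q + (1-q)(p')^{c+1} = f_{c+1}(p') \le p'.
\]
Thus $p_i \le p' < 1$ for every $i \ge 0$, as required.

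The only genuinely delicate point is obtaining a bound that is uniform in $k$, since a priori the relevant fixed point of $f_k$ moves with $k$; the resolution is the observation that $x \mapsto x^k$ shrinks as $k$ grows on $[0,1]$, so the slowest-decaying, hence least favorable, case is the smallest admissible value $k = c+1$. Everything else is a routine fixed-point/IVT argument, and the role of the hypothesis $q < \tfrac{c}{c+1}$ is pinned down precisely by the sign of $\phi'(1)$: it is what makes the fixed point at $1$ repelling and thereby guarantees a second fixed point strictly below $1$ to serve as our trap.
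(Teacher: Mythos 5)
Your proof is correct in substance and follows essentially the same route as the paper: both arguments reduce to the worst case $k = c+1$ and take $p'$ to be the fixed point of $x \mapsto q + (1-q)x^{c+1}$ in $(q,1)$ --- your root $p^*$ of $\phi$ is exactly the paper's $p'$, since the paper's defining equation $f(p') = 1/(1-q)$ with $f(p) = 1 + p + \cdots + p^c = (1-p^{c+1})/(1-p)$ rearranges to $\phi(p') = 0$ --- and then both run the identical induction using $k \ge c+1$. One small imprecision: your claim that \emph{every} $p' \in (p^*,1)$ satisfies $f_{c+1}(p') \le p'$ is not justified as written (it requires knowing $\phi \ge 0$ on all of $(p^*,1)$, which follows from the unimodality of $\phi$ but is not argued), yet it is also unnecessary, since taking $p' = p^*$ itself gives $f_{c+1}(p') = p'$ and the induction closes just as well with equality.
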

\begin{proof}
The function  
$f(p) := 1 + p + \ldots + p^{c}$ 
is continuous and strictly increasing on $[0,1]$ 
with $f(0) = 1$ and $f(1) = c+1$. 
The assumption $q \in (0,c/(c+1))$ implies that 
$1/(1-q) \in (1,c+1)$, 
so there exists a unique $p' \in (0,1)$ such that $f(p') = 1/(1-q)$. 
By construction $p'$ is a function of $c$ and $q$. 
We will show by induction that $p_{i} \leq p'$ for every $i \geq 0$. 

First, observe that $f(p) \leq \sum_{m \geq 0} p^{m} = 1/(1-p)$. 
Thus $1/(1-p') \geq 1/(1-q)$ and so $q \leq p'$. Since $p_0 = q$, this proves the base case of the induction. 

For the induction step, first note that if $p \in (0,1)$, then 
$f(p) = (1- p^{c+1})/(1-p)$. 
Therefore the equation 
$f(p') = 1/(1-q)$ 
implies that 
$(1-q) \left( p' \right)^{c+1} = p' - q$. 
So if $p_{i} \leq p'$, then~\eqref{eq:rec} implies that 
$p_{i+1} = q + (1-q) p_i^{k} 
\leq q + (1-q)\left(p'\right)^{c+1} 
= q + (p' - q) = p'$, 
where we also used the assumption $k \geq c+1$ in the inequality. 
\end{proof}

\subsection{Generating function derivation}

We begin by introducing some additional notation. Let $b_{\ell, i}$ denote the label of the node located at level $\ell$ in position $i$ from the left in the original tree~$X$; we start the indexing of $i$ from $0$, so~$b_{\ell,0}$ is the label of the leftmost vertex on level $\ell$. 
Similarly, let $a_{\ell, i}$ denote the label of the node located at level $\ell$ in position $i$ from the left in the subtrace $Z$. 
Observe that for $i \in \{0,1, \ldots, k^{\ell} -1\}$ we can write $i = t_{\ell-1} k^{\ell-1} + t_{\ell-2} k^{\ell-2} + \ldots + t_0$ with $t_i \in \{0, \ldots, k-1\}$, that is, $t_{\ell-1}t_{\ell-2}\ldots t_0$ is the base $k$ representation of $i$. 
To abbreviate notation, we will write $a_{\ell, i} = a_{\ell, t_{\ell-1} k^{\ell-1} + t_{\ell-2} k^{\ell-2} + \ldots + t_0}$ as simply~$a_{\ell, t_{\ell-1}\ldots t_0}$.

We introduce, for every level $\ell$, a multivariate complex generating function 
whose coefficients are the labels of the nodes at level $\ell$ of a subtrace $Z$. 
Specifically, 
we introduce complex variables $w_{0}, \ldots, w_{\ell-1}$ for each position in the base $k$ representation and define 
\begin{equation}\label{eq:genfn_def}
A_{\ell}(w) := \sum_{t_{0} = 0}^{k-1} \ldots \sum_{t_{\ell-1} = 0}^{k-1} a_{\ell, t_{\ell-1}\ldots t_0} w_{\ell-1}^{t_{\ell-1}} \ldots w_0^{t_0}.
\end{equation}
We are now ready to state the main result of this subsection, which computes the expectation of this generating function. 

\begin{lemma}
\label{genfun}
For every $\ell \in \{1, \ldots, d\}$ we have that 
\begin{equation}\label{eq:genfn}
\mathbb{E}[A_{\ell}(w)] = (1-q)^{\ell-1}(1-p_{d-\ell})\sum_{t_0=0}^{k-1} \ldots \sum_{t_{\ell-1} = 0}^{k-1} b_{\ell, t_{\ell-1}\ldots t_0} \prod_{m=0}^{\ell-1}((1-p_{d-\ell+m})w_{m} + p_{d-\ell+m})^{t_m}.
\end{equation}
\end{lemma}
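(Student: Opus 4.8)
The plan is to expand the expectation by linearity of $\E$ and then reorganize the sum so that it ranges over the nodes of the \emph{original} tree $X$ rather than over positions in the subtrace. The starting observation is that each node appearing at level $\ell$ of the augmented subtrace is either a genuine surviving node of $X$, in which case $a_{\ell,i}$ equals the original label $b_{\ell,\cdot}$ of that node, or a padding node, in which case $a_{\ell,i}=0$ and contributes nothing to $A_\ell(w)$. Crucially, since the padding $0$s are always appended to the \emph{right} of the surviving children, they never shift the position of a genuine node. Hence, writing $\mathbbm{1}[v\in Z]$ for the indicator that node $v$ survives into the subtrace and letting $(T_0(v),\ldots,T_{\ell-1}(v))$ denote the base-$k$ digits of its position in $Z$, I would rewrite
\[
\E[A_\ell(w)] = \sum_{v \text{ at level } \ell} b_{\ell,\mathrm{pos}(v)}\, \E\!\left[\mathbbm{1}[v\in Z]\prod_{m=0}^{\ell-1} w_m^{T_m(v)}\right],
\]
where the sum is over the $k^\ell$ nodes at level $\ell$ of $X$ and $\mathrm{pos}(v)$ is the original position of $v$.

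The heart of the argument is to compute the inner expectation for a fixed $v$ whose original position has base-$k$ digits $(t_0,\ldots,t_{\ell-1})$. First, the survival computation already carried out gives $\p(v\in Z)=(1-q)^{\ell-1}(1-p_{d-\ell})$, which will produce the prefactor. Next, I would identify the digit $T_m(v)$ combinatorially: following the root-to-$v$ path $u_0=\text{root},u_1,\ldots,u_\ell=v$, one checks that $u_{\ell-m}$ is the $t_m$-th child of $u_{\ell-m-1}$, so $T_m(v)$ equals the number of the $t_m$ original left-siblings of the ancestor $u_{\ell-m}$ (a node at level $\ell-m$, i.e.\ height $d-\ell+m$) that themselves survive into the subtrace. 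Since the parent $u_{\ell-m-1}$ is already in $Z$, each such left-sibling lies in $Z$ exactly when it has a surviving path down to a depth-$d$ leaf, which occurs independently with probability $1-p_{d-\ell+m}$. Thus $T_m(v)$ is a sum of $t_m$ i.i.d.\ Bernoulli$(1-p_{d-\ell+m})$ variables.

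The factorization then follows from independence: the subtrees hanging off the left-siblings across the various levels $m$, together with the event $\{v\in Z\}$ (which depends only on the survival of the path $u_0,\ldots,u_\ell$ and of a path in the subtree below $v$), all live on pairwise disjoint subtrees of $X$, hence involve independent collections of deletions. Consequently the indicator and the $\ell$ factors $w_m^{T_m(v)}$ are mutually independent, so
\[
\E\!\left[\mathbbm{1}[v\in Z]\prod_{m=0}^{\ell-1} w_m^{T_m(v)}\right] = \p(v\in Z)\prod_{m=0}^{\ell-1}\E\!\left[w_m^{T_m(v)}\right].
\]
A single Bernoulli$(1-p_{d-\ell+m})$ variable $B$ satisfies $\E[w_m^{B}]=(1-p_{d-\ell+m})w_m+p_{d-\ell+m}$, so each level contributes $\big((1-p_{d-\ell+m})w_m+p_{d-\ell+m}\big)^{t_m}$. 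Substituting back and summing over all $v$ (equivalently, over all digit tuples $(t_0,\ldots,t_{\ell-1})$) yields exactly~\eqref{eq:genfn}.

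I expect the main obstacle to be the careful bookkeeping rather than any deep probabilistic content: pinning down the correspondence between the base-$k$ digit index $m$ and the level $\ell-m$ of the relevant ancestor (so that the correct height $d-\ell+m$, and hence the correct $p_{d-\ell+m}$, appears in each factor), and rigorously justifying the disjoint-subtree independence that allows the indicator and the per-level digit generating functions to factor. Once the positions are understood this way, the remainder is just the moment generating function of a sum of independent Bernoullis.
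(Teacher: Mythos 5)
Your proof is correct and is essentially the paper's argument viewed from the other direction: the paper fixes a subtrace position $j$ and sums the transition probabilities $\p(a_{\ell,j}=b_{\ell,i})$ (explicit binomial coefficients) over original nodes $i$, then interchanges sums and applies the binomial theorem $\ell$ times, whereas you fix an original node $v$, observe that conditionally on $\{v\in Z\}$ its position digits $T_m(v)$ are independent Binomial$(t_m,1-p_{d-\ell+m})$ variables, and invoke their probability generating functions. The probabilistic content (survival probability of $v$, independent per-level binomial displacement via disjoint sibling subtrees, padding on the right not shifting genuine nodes) is identical in both versions.
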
 
This lemma is useful because the right hand side of~\eqref{eq:genfn} contains the labels of the nodes on level~$\ell$ of~$X$, while the left hand side can be estimated by averaging over subtraces. 

\begin{proof}[Proof of Lemma~\ref{genfun}] 
By linearity of expectation we have that 
\begin{equation}\label{eq:exp_lin}
\E \left[ A_{\ell}(w) \right] = \sum_{t_{0} = 0}^{k-1} \ldots \sum_{t_{\ell-1} = 0}^{k-1} \E \left[ a_{\ell, t_{\ell-1}\ldots t_0} \right] w_{\ell-1}^{t_{\ell-1}} \ldots w_0^{t_0},
\end{equation}
so our goal is to compute 
$\E \left[a_{\ell, t_{\ell-1}\ldots t_0} \right]$. 
For node $i = i_{\ell-1}\ldots i_0$ on level $\ell$, we may interpret each digit $i_m$ in the base $k$ representation as follows: 
consider node $i$'s ancestor on level $\ell - m$; the horizontal position of this node amongst its siblings is $i_m$. Thus, if the original bit $b_{\ell,i}$ survives in the subtrace, it can only end up in position $j = j_{\ell-1}j_{\ell-2}\ldots j_0$ on level $\ell$ satisfying $j_{m} \leq i_{m}$ for every $m$. 
If the $m$th digit of the location of $b_{\ell,i}$ in the subtrace is $j_{\ell-m}$, then exactly $i_{\ell-m} - j_{\ell-m}$ siblings left of the ancestor of $i$ on level $\ell-m$ must have gotten deleted and the ancestor of $i$ on level $\ell-m$ must have survived in the subtrace. Thus, the probability that bit $a_{\ell, j}$ of the subtrace is the original bit~$b_{\ell,i}$ is given by  
\begin{align*}
\p(a_{\ell, j_{\ell-1}\ldots j_0} = b_{\ell, i_{\ell-1}\ldots i_0}) 
&= \binom{i_0}{j_0} p_{d-\ell}^{i_0 - j_0} (1-p_{d-\ell})^{j_0+1} \times \binom{i_1}{j_1} p_{d-\ell+1}^{i_1 - j_1} (1-p_{d-\ell+1})^{j_1}(1-q) \times 
\ldots \\
&\quad \times \ldots 
\times \binom{i_{\ell-1}}{j_{\ell-1}} p_{d-1}^{i_{\ell-1} - j_{\ell-1}} (1-p_{d-1})^{j_{\ell-1}}(1-q) \\
&= (1-q)^{\ell-1} (1-p_{d-\ell})\prod_{m=0}^{\ell-1} \binom{i_m}{j_m} p_{d-\ell+m}^{i_m - j_m}(1-p_{d-\ell + m})^{j_m}
\end{align*} 
Summing over all $i$ satisfying $i_{m} \geq t_{m}$ for every $m$, 
and plugging into~\eqref{eq:exp_lin}, we obtain that 
\begin{align*}
\E \left[ A_{\ell}(w) \right] 
&= (1-q)^{\ell-1} (1-p_{d-\ell}) \times \\
&\quad \times \sum_{t_{0} = 0}^{k-1} \ldots \sum_{t_{\ell-1} = 0}^{k-1} 
\sum_{i_{0} = t_0}^{k-1} \ldots \sum_{i_{\ell-1} = t_{\ell-1}}^{k-1} 
b_{\ell, i_{\ell-1}\ldots i_0} 
\prod_{m=0}^{\ell-1} \binom{i_m}{t_m} p_{d-\ell+m}^{i_m - t_m}(1-p_{d-\ell + m})^{t_m} w_{m}^{t_m}.
\end{align*}
Interchanging the order of summations and using the binomial theorem ($\ell$ times) we obtain~\eqref{eq:genfn}. 
\end{proof}

\subsection{Bounding the modulus of the generating function}

Here we prove a simple lower bound on the modulus of a multivariate Littlewood polynomial. This bound will extend to the generating function computed above for appropriate choices of $w_m$. The argument presented here is inspired by the method of proof of~\cite[Lemma~4]{KMMP19}. Throughout the paper we let $\mathbb{D}$ denote the unit disc in the complex plane and let $\partial \mathbb{D}$ denote its boundary.

\begin{lemma}
\label{lwd}
Let $F(z_0, \ldots, z_{\ell-1})$ be a nonzero multivariate polynomial with monomial coefficients in $\{-1, 0, 1\}$. 
Then, 
\[
\sup_{z_0, \ldots, z_{\ell-1} \in \partial \mathbb{D}} \vert F(z_0, \ldots, z_{\ell - 1}) \vert \geq 1.  
\]
\end{lemma}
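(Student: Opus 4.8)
The plan is to reduce the multivariate statement to a univariate one via integration on the torus, and then to apply a Cauchy-coefficient (or Parseval-type) argument. Concretely, I would integrate $|F|$—or more conveniently a related quantity—against the uniform probability measure on the torus $(\partial\mathbb{D})^{\ell}$, parametrizing $z_m = e^{i\theta_m}$. The key structural fact is that the monomials $z_0^{t_0}\cdots z_{\ell-1}^{t_{\ell-1}}$ form an orthonormal family under this measure: integrating $z_0^{t_0}\cdots z_{\ell-1}^{t_{\ell-1}} \overline{z_0^{s_0}\cdots z_{\ell-1}^{s_{\ell-1}}}$ over the torus yields $1$ if all exponents match and $0$ otherwise.

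First I would single out one nonzero coefficient. Since $F$ is a nonzero polynomial with coefficients in $\{-1,0,1\}$, there is some multi-index $(t_0,\ldots,t_{\ell-1})$ whose coefficient $c$ satisfies $|c| = 1$. By orthonormality, this coefficient can be extracted as an integral:
\[
c = \int_{(\partial\mathbb{D})^{\ell}} F(z_0,\ldots,z_{\ell-1})\, \overline{z_0^{t_0}\cdots z_{\ell-1}^{t_{\ell-1}}} \; d\mu,
\]
where $\mu$ is the normalized Haar (uniform) measure on the torus. Taking absolute values, since $|\overline{z_0^{t_0}\cdots z_{\ell-1}^{t_{\ell-1}}}| = 1$ on the torus, I get
\[
1 = |c| \leq \int_{(\partial\mathbb{D})^{\ell}} |F(z_0,\ldots,z_{\ell-1})|\; d\mu \leq \sup_{z_0,\ldots,z_{\ell-1}\in\partial\mathbb{D}} |F(z_0,\ldots,z_{\ell-1})|,
\]
where the last step uses that the average of a function over a probability space is at most its supremum. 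This chain of inequalities is exactly the claimed bound.

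I do not anticipate a serious obstacle here; the argument is a clean application of orthogonality of characters on the torus, and the hypothesis that coefficients lie in $\{-1,0,1\}$ is used only to guarantee that some nonzero coefficient has modulus exactly $1$ (so that the lower bound is $1$ rather than some smaller positive number). The one point requiring a little care is making the orthonormality computation precise in the multivariate setting—but this factorizes coordinate by coordinate, since the torus measure is a product measure and each univariate integral $\int_{\partial\mathbb{D}} z^{a}\overline{z^{b}}\,d\theta/(2\pi)$ evaluates to the Kronecker delta $\delta_{a,b}$. An alternative route, if one prefers to avoid integration, is to invoke Parseval's identity directly: the $L^2$ norm of $F$ on the torus equals the $\ell^2$ norm of its coefficient vector, which is at least $1$, and the $L^2$ norm is bounded above by the sup norm. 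Either formulation yields the result with essentially the same work.
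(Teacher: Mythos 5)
Your proof is correct, and it takes a genuinely different route from the paper. The paper proceeds by an iterated reduction: it repeatedly factors out the lowest power of each variable in turn, sets that variable to zero, and applies the maximum modulus principle at each stage to show that the supremum on the torus dominates $|F_{\ell-1}(0)| = 1$, the modulus of a surviving constant term. You instead extract a single coefficient of modulus $1$ as a Fourier coefficient, using orthogonality of the monomials $z_0^{t_0}\cdots z_{\ell-1}^{t_{\ell-1}}$ under normalized Haar measure on $(\partial\mathbb{D})^{\ell}$, and bound it by $\int |F|\,d\mu \leq \sup |F|$. Both arguments are complete and correct. Your version is shorter, avoids the maximum modulus principle entirely, and actually proves the slightly stronger statement that the sup norm on the torus dominates the largest coefficient modulus of \emph{any} polynomial (the hypothesis on $\{-1,0,1\}$ coefficients is used only to guarantee a coefficient of modulus exactly $1$); it also extends verbatim to Laurent polynomials, whereas the paper's factoring step is tailored to ordinary polynomials. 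The paper's inductive scheme is modeled on arguments from the trace reconstruction literature (where one must bound Littlewood polynomials on small arcs rather than the full circle, and coefficient extraction no longer suffices), but for the full-torus bound needed here your orthogonality argument does the job with less machinery.
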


\begin{proof}
We define a sequence of polynomials $\{ F_{i} \}_{i = 0}^{\ell - 1}$ inductively as follows, 
where $F_{i}$ is a function of the variables $z_{i}, \ldots, z_{\ell-1}$.  
First, let $t_{0}$ be the smallest power of $z_0$ in a monomial of $F$ and let 
$F_{0} (z_0, \ldots, z_{\ell - 1}) := z_{0}^{-t_{0}} F(z_{0}, \ldots, z_{\ell - 1})$. By construction, $F_{0}$ has at least one monomial where $z_{0}$ does not appear. 
For $i \in \{1, \ldots, \ell - 1\}$, given $F_{i-1}$ we define $F_{i}$ as follows. 
Let $t_{i}$ be the smallest power of $z_{i}$ in a monomial of $F_{i-1}(0,z_{i}, \ldots, z_{\ell-1})$ and let 
$F_{i}(z_{i}, \ldots, z_{\ell - 1}) := z_{i}^{-t_{i}} F_{i-1}(0,z_{i}, \ldots, z_{\ell-1})$. 
Observe that this construction guarantees, for every $i$, that the polynomial 
$F_{i}(z_{i}, \ldots, z_{\ell - 1})$ 
has at least one monomial where $z_{i}$ does not appear. 
In particular, the univariate polynomial $F_{\ell-1}(z_{\ell - 1})$ has a nonzero constant term. Since the coefficients of the polynomial are in $\{-1,0,1\}$, this means that the constant term has absolute value $1$, that is, $|F_{\ell -1}(0)| = 1$.

Let $\left( z_{0}^{*}, \ldots, z_{\ell-1}^{*} \right)$ denote the maximizer of $\vert F(z_0, \ldots, z_{\ell-1}) \vert$ with $z_m \in \partial \mathbb{D}$ for all $m$. Now, by the maximum modulus principle, observe that for all $i$ we have that  
\[
\vert (z_i^*)^{t_i} \vert \vert F_i(z_i^*, \ldots, z_{\ell-1}^*) \vert =\vert F_i(z_i^*, \ldots, z_{\ell-1}^*) \vert \geq \vert F_i(0, z_{i+1}^*, \ldots, z_{\ell-1}^*) \vert.
\]
Using the definition of $F_i$ and iterating the above inequality yields, for all $i \in \{0, \ldots, \ell-2 \}$, that 
\[
\vert (z_i^*)^{t_i} \vert \vert F_i(z_i^*, \ldots, z_{\ell-1}^*) \vert \geq \vert F_{i+1}(0, z_{i+2}^*, \ldots, z_{\ell-1}^*) \vert.
\]
By taking the two ends of this chain of inequalities we can thus see that
\[
\vert F(z_0^*, \ldots, z_{\ell-1}^*) \vert \geq \vert F_{\ell-1}(0) \vert = 1. \qedhere
\]
\end{proof}

\subsection{Finishing the proof of Theorem \ref{thm2}} \label{sec:thm2proof_finish}

\begin{proof}[Proof of Theorem \ref{thm2}]
Let $X'$ and $X''$ be two complete $k$-ary trees on $n$ non-root nodes with different binary node labels. Our first goal is to distinguish between $X'$ and $X''$ using subtraces. At the end of the proof we will then explain how to estimate the original tree $X$ using subtraces. 

Since $X'$ and $X''$ have different labels, there exists at least one level of the tree where the node labels differ. Call the minimal such level $\ell_{*} = \ell_{*} \left( X', X'' \right)$; we will use this level of the subtraces to distinguish between $X'$ and $X''$. 
Let 
\[
\left\{ b'_{\ell_{*},i} : i \in \left\{ 0, 1, \ldots, k^{\ell_{*}} - 1 \right\} \right\}  
\quad
\text{ and }
\quad
\left\{ b''_{\ell_{*},i} : i \in \left\{ 0, 1, \ldots, k^{\ell_{*}} - 1 \right\} \right\}
\]
denote the labels on level $\ell_{*}$ of $X'$ and $X''$, respectively. 
Furthermore, for every $i$ define 
$b_{\ell_{*},i} := b'_{\ell_{*},i} - b''_{\ell_{*},i}$. 
By construction, $b_{\ell_{*},i} \in \{ -1, 0, 1 \}$ for every $i$,  
and there exists $i$ such that $b_{\ell_{*},i} \neq 0$. 
Let $Z'$ and $Z''$ be subtraces obtained from $X'$ and $X''$, respectively, 
and let 
\[
\left\{ Z'_{\ell_{*},i} : i \in \left\{ 0, 1, \ldots, k^{\ell_{*}} - 1 \right\} \right\} 
\quad 
\text{ and }
\quad 
\left\{ Z''_{\ell_{*},i} : i \in \left\{ 0, 1, \ldots, k^{\ell_{*}} - 1 \right\} \right\}
\]
denote the labels on level $\ell_{*}$ of $Z'$ and $Z''$, respectively. 
By Lemma~\ref{genfun} we have that 
\begin{multline*}
\E \left[ \sum_{t_{0} = 0}^{k-1} \ldots \sum_{t_{\ell_{*}-1} = 0}^{k-1} Z'_{\ell_{*}, t_{\ell_{*}-1}\ldots t_0} \prod_{m=0}^{\ell_{*}-1} w_{m}^{t_{m}} \right] 
- 
\E \left[ \sum_{t_{0} = 0}^{k-1} \ldots \sum_{t_{\ell_{*}-1} = 0}^{k-1} Z''_{\ell_{*}, t_{\ell_{*}-1}\ldots t_0} \prod_{m=0}^{t_{\ell_{*}-1}} w_{m}^{t_{m}} \right] \\
= (1-q)^{\ell_{*}-1}(1-p_{d-\ell_{*}}) \sum_{t_0=0}^{k-1} \ldots \sum_{t_{\ell_{*}-1} = 0}^{k-1} b_{\ell_{*}, t_{\ell_{*}-1}\ldots t_0} \prod_{m=0}^{\ell_{*}-1}((1-p_{d-\ell_{*}+m})w_{m} + p_{d-\ell_{*}+m})^{t_m}.
\end{multline*}
Now define the multivariate polynomial $B(z)$ in the variables $z=\left(z_{0}, \ldots, z_{\ell_{*}-1} \right)$ as follows: 
\[
B(z) := \sum_{t_0=0}^{k-1} \ldots \sum_{t_{\ell_{*}-1} = 0}^{k-1} b_{\ell_{*}, t_{\ell_{*}-1}\ldots t_0} \prod_{m=0}^{\ell_{*}-1} z_{m}^{t_m}.
\]
Lemma~\ref{lwd} implies that there exists 
$z^{*} = \left(z_{0}^{*}, \ldots, z_{\ell_{*}-1}^{*} \right)$ 
such that $z_{m}^{*} \in \partial \mathbb{D}$ for every $m \in \left\{ 0, \ldots, \ell_{*} - 1 \right\}$ and 
\[
\left| B \left( z^{*} \right) \right| \geq 1. 
\]
For $m \in \left\{ 0, \ldots, \ell_{*} - 1 \right\}$ let 
\[
w_{m}^{*} := \frac{z_{m}^{*} - p_{d-\ell_{*}+m}}{1 - p_{d-\ell_{*}+m}}. 
\]
Note that the polynomial $B$ is a function of $X'$ and $X''$, and thus so is $z^{*}$ and also $w^{*} = \left(w_{0}^{*}, \ldots, w_{\ell_{*}-1}^{*} \right)$. 
Putting together the four previous displays and using the triangle inequality we obtain that 
\begin{equation}\label{eq:mod_bound}
\sum_{t_{0} = 0}^{k-1} \ldots \sum_{t_{\ell_{*}-1} = 0}^{k-1} 
\left| \E \left[ Z'_{\ell_{*}, t_{\ell_{*}-1}\ldots t_0} \right] - \E \left[ Z''_{\ell_{*}, t_{\ell_{*}-1}\ldots t_0} \right] \right| 
\prod_{m=0}^{\ell_{*}-1} \left| w_{m}^{*} \right|^{t_{m}} 
\geq 
(1-q)^{\ell_{*}-1}(1-p_{d-\ell_{*}}).
\end{equation}
Next we estimate 
$\left| w_{m}^{*} \right|$. 
By the definition of $w_{m}^{*}$ and the triangle inequality we have that 
\begin{equation}\label{eq:w_magnitude}
\left| w_{m}^{*} \right| 
= \frac{\left| z_{m}^{*} - p_{d-\ell_{*}+m} \right|}{1 - p_{d-\ell_{*}+m}} 
\leq \frac{\left| z_{m}^{*} \right| + p_{d-\ell_{*}+m}}{1 - p_{d-\ell_{*}+m}}
\leq \frac{2}{1 - p'},
\end{equation}
where in the last inequality we used that 
$\left| z_{m}^{*} \right| = 1$ 
and that 
$p_{d-\ell_{*}+m} \leq p' < 1$ (from Lemma~\ref{rec}). 
Note that $p'$ is a constant that depends only on $c$ and $q$ (recall that $c$ is an input to the theorem). 
The bound in~\eqref{eq:w_magnitude} implies that 
\begin{equation}\label{eq:prod_w_to_t}
\prod_{m=0}^{\ell_{*}-1} \left| w_{m}^{*} \right|^{t_{m}} 
\leq \left( \frac{2}{1 - p'} \right)^{k \ell_{*}}.
\end{equation}
Plugging this back into~\eqref{eq:mod_bound} (and using that $p_{d-\ell_{*}} \leq p'$) we get that 
\[
\sum_{t_{0} = 0}^{k-1} \ldots \sum_{t_{\ell_{*}-1} = 0}^{k-1} 
\left| \E \left[ Z'_{\ell_{*}, t_{\ell_{*}-1}\ldots t_0} \right] - \E \left[ Z''_{\ell_{*}, t_{\ell_{*}-1}\ldots t_0} \right] \right| 
\geq 
(1-q)^{\ell_{*}-1}(1-p')^{k \ell_{*} + 1} (1/2)^{k\ell_{*}}.
\]
Thus by the pigeonhole principle there exists $i_{*} \in \left\{ 0, 1, \ldots, k^{\ell_{*}} - 1 \right\}$ such that 
\begin{equation}\label{eq:gap}
\left| \E \left[ Z'_{\ell_{*}, i_{*}} \right] - \E \left[ Z''_{\ell_{*}, i_{*}} \right] \right| 
\geq 
\frac{(1-q)^{\ell_{*}-1}(1-p')^{k \ell_{*} + 1} (1/2)^{k\ell_{*}}}{k^{\ell_{*}}} 
\geq 
\exp \left( - C k \ell_{*} \right) 
\geq 
\exp \left( - C k \log_{k} n \right),
\end{equation}
where the second inequality holds for a large enough constant $C$ that depends only on $c$ and $q$, 
while the third inequality is because the depth of the tree is $\log_{k} n$. 
Note that $i_{*}$ is a function of~$X'$ and $X''$. 

Now suppose that we sample $T$ traces of $X$ from the TED deletion channel and let $Z^{1}, \ldots, Z^{T}$ denote the corresponding subtraces. 
Let $X'$ and $X''$ be two complete $k$-ary labeled trees with different labels, and recall the definitions of $\ell_{*} = \ell_{*} \left( X', X'' \right)$ and $i_{*} = i_{*} \left( X', X'' \right)$ from above. 
We say that $X'$ \emph{beats} $X''$ (with respect to these samples) if 
\[
\left| \frac{1}{T} \sum_{t=1}^{T} Z^{t}_{\ell_{*},i_{*}} - \E \left[ Z'_{\ell_{*}, i_{*}} \right] \right| 
< 
\left| \frac{1}{T} \sum_{t=1}^{T} Z^{t}_{\ell_{*},i_{*}} - \E \left[ Z''_{\ell_{*}, i_{*}} \right] \right|.
\]
We are now ready to define our estimate $\widehat{X}$ of the labels of the original tree. 
If there exists a complete $k$-ary tree $X'$ that beats every other complete $k$-ary tree (with respect to these samples), then we let $\widehat{X} := X'$. Otherwise, define $\widehat{X}$ arbitrarily. 

Finally, we show that this estimate is correct with high probability. 
Let $\eta := \exp \left( - C k \log_{k} n \right)$. 
By a union bound and a Chernoff bound (using~\eqref{eq:gap}), the probability that the estimate is incorrect is bounded by 
\[
\p \left( \widehat{X} \neq X \right) 
\leq 
\sum_{X' : X' \neq X} \p \left( X' \text{ beats } X \right) 
\leq 2^{n} \exp \left( - T \eta^{2} / 2 \right) 
= 2^{n} \exp \left( - \frac{T}{2} \exp \left( - 2 C k \log_{k} n \right) \right).
\]
Choosing $T = \exp \left( 3 C k \log_{k} n \right)$, the right hand side of the display above tends to $0$. 
\end{proof}

\section{Reconstructing more general tree topologies} \label{sec:general_proofs}

The method of proof shown in the previous section naturally lends itself to the more general results of Theorem~\ref{tedgen} for the TED deletion model and Theorem~\ref{aon} for the AON deletion model. The proofs are almost entirely identical to the one presented above, so we will only highlight the new ideas below and leave the details to the reader. 

\subsection{TED deletion model, more general tree topologies}

Before we proceed with the proof, we must clarify the notion of a subtrace. In Section~\ref{sec:prelims} we described the notion of an augmented subtrace for a $k$-ary tree. 
More generally, for trees in the setting of Theorem~\ref{tedgen}, 
we define an augmented subtrace in a similar way; the key point is that the underlying tree structure of the augmented subtrace is the same as the underlying tree structure of $X$. 
That is, 
we start with the root of the subtrace $Z$, and if it has less than $k_{0}$ children, we add nodes with $0$ labels to the right of its existing children, until the root has $k_0$ children in total. We then move on to the leftmost node on level $1$ and add new children with label 0 to the right of its existing children, until it has $k_1$ children. We continue in this fashion from left to right on each level, ensuring that each node on level $\ell$ has $k_{\ell}$ children, moving from top to bottom of the tree.

\begin{proof}[Proof of Theorem \ref{tedgen}]
As before, we begin by computing, for every node in the tree, its probability of survival in a subtrace. 
The quantities $\{ p_h \}_{h=0}^{d-1}$ can be defined exactly as before, where again $d$ denotes the depth of the tree. 
The recurrence relation changes slightly: for every $h \geq 0$ we have that 
\[
p_{h+1} = q + (1-q) p_{h}^{k_{d-h-1}};
\]
furthermore, the initial condition satisfies $p_{0} = q$. 
The following lemma is the analog of Lemma~\ref{rec}; we omit its proof, since it is identical to that of Lemma~\ref{rec}. 
\begin{lemma}
Suppose that $0 < q < \frac{c}{c+1}$ and $k_{\min} > c$ for some $c \in \mathbb{Z}^+$. 
There exists $p' < 1$, depending only on $c$ and $q$, such that 
$p_i \leq p' < 1$ for every $i \geq 0$. 
\end{lemma}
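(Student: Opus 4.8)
The plan is to follow the proof of Lemma~\ref{rec} essentially verbatim, since the only structural change is that the fixed exponent $k$ in the recurrence has been replaced by the level-dependent exponent $k_{d-h-1}$. First I would introduce the same auxiliary function $f(p) := 1 + p + \cdots + p^{c}$, which is continuous and strictly increasing on $[0,1]$ with $f(0) = 1$ and $f(1) = c+1$. The hypothesis $q \in (0, c/(c+1))$ forces $1/(1-q) \in (1, c+1)$, so there is a unique $p' \in (0,1)$ with $f(p') = 1/(1-q)$, and by construction $p'$ depends only on $c$ and $q$. Exactly as before, the elementary bound $f(p) \le \sum_{m \ge 0} p^{m} = 1/(1-p)$ gives $1/(1-p') \ge 1/(1-q)$, hence $q \le p'$, which settles the base case $p_0 = q \le p'$.

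For the induction step, suppose $p_i \le p'$. The crucial observation is that every exponent appearing in the recurrence satisfies $k_{d-h-1} \ge k_{\min} > c$, hence $k_{d-h-1} \ge c+1$, because the indices $d-h-1$ that arise correspond to the non-leaf levels over which $k_{\min}$ is defined. Since $0 \le p_i \le p' < 1$, raising to a larger power only decreases the value, so $p_i^{k_{d-i-1}} \le (p')^{k_{d-i-1}} \le (p')^{c+1}$. Combining this with the algebraic identity $(1-q)(p')^{c+1} = p' - q$, which follows from rewriting $f(p') = (1 - (p')^{c+1})/(1-p') = 1/(1-q)$, I obtain
\[
p_{i+1} = q + (1-q) p_i^{k_{d-i-1}} \le q + (1-q)(p')^{c+1} = q + (p' - q) = p',
\]
completing the induction and hence establishing the uniform bound $p_i \le p' < 1$ for every $i$ for which the recurrence is defined.

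The main (and really only) point to verify carefully is that the variable exponent is always at least $c+1$; once this is in hand, the fact that $p' < 1$ makes $x \mapsto x^{k_{d-h-1}}$ monotone in the exponent in the favorable direction, so that the worst case is precisely the smallest admissible exponent $c+1$, which is exactly the case already handled in Lemma~\ref{rec}. I therefore expect no genuine obstacle here: the level-dependence of the exponent is harmless precisely because $(p')^{c+1}$ dominates $(p')^{k_{d-h-1}}$ uniformly, which is why the proof can legitimately be declared identical to that of Lemma~\ref{rec}.
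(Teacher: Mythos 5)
Your proof is correct and matches the paper's approach exactly: the paper simply declares the proof identical to that of Lemma~\ref{rec}, and your argument is that proof with the one necessary observation spelled out, namely that every exponent $k_{d-h-1}$ arising in the recurrence belongs to a non-leaf level and hence is at least $k_{\min} \geq c+1$, so $(p')^{k_{d-h-1}} \leq (p')^{c+1}$ and the induction step goes through unchanged.
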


Next, we turn to defining and analyzing an appropriate generating function. 
Note that there are $\prod_{m=0}^{\ell-1} k_{m}$ nodes on level $\ell$ of the tree $X$. Observe that every 
$i \in \left\{ 0, 1, \ldots, \prod_{m=0}^{\ell-1} k_{m} - 1 \right\}$ 
can be uniquely written as 
\begin{equation}\label{eq:rep_gen}
i = i_{\ell-1} \prod_{m=1}^{\ell-1} k_{m} + i_{\ell-2} \prod_{m=2}^{\ell-1} k_{m} + \ldots + i_1 k_{\ell-1} + i_{0},
\end{equation}
where 
$i_{m} \in \left\{ 0, 1, \ldots, k_{\ell-1-m} -1 \right\}$ 
for every $m \in \{ 0,1, \ldots, \ell-1\}$. 
The interpretation of each digit~$i_{m}$ in this representation is the same as before: 
consider node $i$'s ancestor on level $\ell - m$; the horizontal position of this node amongst its siblings is $i_{m}$. 
To abbreviate notation, we write 
$i = i_{\ell-1} \ldots i_{0}$ 
for the expression in~\eqref{eq:rep_gen}. 
With this representation of the nodes at level $\ell$, we may define the generating function for level $\ell$ as follows: 
\[
A_{\ell}(w) := \sum_{t_{0} = 0}^{k_{\ell-1}-1} \ldots \sum_{t_{\ell-1} = 0}^{k_{0}-1} a_{\ell, t_{\ell-1}\ldots t_0} w_{\ell-1}^{t_{\ell-1}} \ldots w_0^{t_0}.
\]
The following lemma is the analog of Lemma~\ref{genfun}; we omit its proof, since it is analogous to that of Lemma~\ref{genfun}. 
\begin{lemma}
\label{genfun_gen}
For every $\ell \in \{1, \ldots, d\}$ we have that 
\begin{equation*}\label{eq:genfn_gen}
\mathbb{E}[A_{\ell}(w)] = (1-q)^{\ell-1}(1-p_{d-\ell})\sum_{t_0=0}^{k_{\ell-1}-1} \ldots \sum_{t_{\ell-1} = 0}^{k_{0}-1} b_{\ell, t_{\ell-1}\ldots t_0} \prod_{m=0}^{\ell-1}((1-p_{d-\ell+m})w_{m} + p_{d-\ell+m})^{t_m}.
\end{equation*}
\end{lemma}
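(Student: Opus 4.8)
The plan is to follow the proof of Lemma~\ref{genfun} essentially verbatim, tracking only the two bookkeeping changes introduced by a level-dependent branching factor: the ranges of the digit sums and the survival probabilities fed into each binomial factor. As before, the first step is to apply linearity of expectation to write $\E[A_\ell(w)]$ as $\sum_{t_0=0}^{k_{\ell-1}-1}\cdots\sum_{t_{\ell-1}=0}^{k_0-1}\E[a_{\ell,t_{\ell-1}\ldots t_0}]\,w_{\ell-1}^{t_{\ell-1}}\cdots w_0^{t_0}$, so that the task reduces to computing $\E[a_{\ell,j}]$ for each position $j$ on level $\ell$ of the subtrace.

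For this I would reuse the digit interpretation recorded after~\eqref{eq:rep_gen}: in the mixed-radix expansion $i = i_{\ell-1}\ldots i_0$, the digit $i_m$ is the horizontal position, among its siblings, of the ancestor of node $i$ on level $\ell-m$. Consequently a source label $b_{\ell,i}$ can appear in the subtrace only at a position $j = j_{\ell-1}\ldots j_0$ with $j_m \le i_m$ for every $m$, and the event that it lands exactly at $j$ factorizes across levels: at each coordinate $m$ precisely $i_m - j_m$ of the left-siblings of the relevant ancestor must fail to contribute a surviving path (each with probability $p_{d-\ell+m}$), $j_m$ of them must survive (each with probability $1-p_{d-\ell+m}$), and the ancestor itself must survive. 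This yields the same product of binomial factors as in Lemma~\ref{genfun},
\[
\p\!\left(a_{\ell,j}=b_{\ell,i}\right) = (1-q)^{\ell-1}(1-p_{d-\ell})\prod_{m=0}^{\ell-1}\binom{i_m}{j_m}p_{d-\ell+m}^{\,i_m-j_m}(1-p_{d-\ell+m})^{j_m},
\]
the only differences being that the height-indexed quantities $p_{d-\ell+m}$ are now generated by the modified recurrence $p_{h+1}=q+(1-q)p_h^{k_{d-h-1}}$ and that the digit $i_m$ ranges over $\{0,\ldots,k_{\ell-1-m}-1\}$ rather than $\{0,\ldots,k-1\}$.

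To finish I would sum this over all source nodes $i$ with $i_m \ge t_m$ for every $m$, substitute into the linearity expansion above, and then interchange the order of summation so that the sum over each $t_m$ runs from $0$ to $i_m$. Applying the binomial theorem once in each of the $\ell$ coordinates collapses the inner sum over $t_m$ into the factor $\big((1-p_{d-\ell+m})w_m + p_{d-\ell+m}\big)^{i_m}$; relabeling the surviving index $i$ as $t$ then gives exactly the claimed expression, with the digit sums now carrying the level-dependent upper limits $k_{\ell-1-m}-1$.

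The one point that warrants care — and the only place the argument departs from Lemma~\ref{genfun} — is the index matching, since the branching factor appears both height-indexed (inside the survival probabilities $p_{d-\ell+m}$, because an ancestor on level $\ell-m$ sits at height $d-\ell+m$) and level-indexed (inside the sibling counts $k_{\ell-1-m}$, because the siblings of that ancestor are the children of a node on level $\ell-m-1$). Keeping these two indexings straight is the entire content of the generalization; once they are aligned, each binomial-theorem step is identical to the uniform case and no new analytic obstacle arises.
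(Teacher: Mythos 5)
Your proposal is correct and is exactly the argument the paper intends: the paper omits this proof as ``analogous to that of Lemma~\ref{genfun},'' and you have carried out that analogy faithfully, with the only substantive point---the alignment of the height-indexed survival probabilities $p_{d-\ell+m}$ with the level-indexed digit ranges $k_{\ell-1-m}$ from~\eqref{eq:rep_gen}---correctly identified and handled. No gaps.
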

With these tools in place, the remainder of the proof is almost identical to Section~\ref{sec:thm2proof_finish}. 
The inequality~\eqref{eq:prod_w_to_t} is now replaced with 
\begin{equation*}\label{eq:prod_w_to_t_gen}
\prod_{m=0}^{\ell_{*}-1} \left| w_{m}^{*} \right|^{t_{m}} 
\leq \left( \frac{2}{1 - p'} \right)^{k_{\max} \ell_{*}}.
\end{equation*}
Subsequently, by the pigeonhole principle there exists 
$i_{*} \in \left\{0,1,\ldots, \prod_{m=0}^{\ell_{*}-1} k_{m}-1 \right\}$ 
such that 
\[
\left| \E \left[ Z'_{\ell_{*},i_{*}} \right] - \E \left[ Z''_{\ell_{*},i_{*}} \right] \right| 
\geq 
\exp \left( - C k_{\max} \ell_{*} \right) 
\geq 
\exp \left( - C k_{\max} d \right),
\]
where the first inequality holds for a large enough constant $C$ that depends only on $c$ and $q$. 
The rest of the proof is identical to Section~\ref{sec:thm2proof_finish}, 
showing that $T = \exp \left( 3 C k_{\max} d \right)$ traces suffice. 
The claim follows because the depth of the tree is at most $\log_{k_{\min}} n$. 
\end{proof}

\subsection{AON deletion model, arbitrary tree topologies}

We begin by first proving Theorem~\ref{aon} for complete $k$-ary trees. We will then generalize to arbitrary tree topologies. 
Importantly, in the AON model we will work directly with the tree traces, as opposed to the subtraces as we did previously. As described in Section~\ref{sec:prelims}, we augment each trace~$Y$ with additional nodes with 0 labels to form a $k$-ary tree. In what follows, when we say ``trace'' we mean this augmented trace. 

\begin{theorem}
\label{aon_kary}
In the AON model, there exists a finite constant $C$ depending only on $q$ such that $\exp(Ck\log_k n)$ traces suffice to reconstruct a complete $k$-ary tree on $n$ nodes w.h.p.\ (here $k \geq 2$). 
\end{theorem}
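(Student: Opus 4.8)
The plan is to run the same mean-based complex-analytic argument used for Theorem~\ref{thm2}, with two changes dictated by the AON model: we work with the full (augmented) trace rather than a subtrace, and the per-level survival probabilities become the constant $q$, which actually simplifies the analysis and removes the need for the constraint $q < c/(c+1)$.

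First I would compute, for a node at level $\ell$ in position $i = i_{\ell-1}\ldots i_0$, the probability that it survives and lands in a given position $j = j_{\ell-1}\ldots j_0$ of the augmented trace. In the AON model a node appears in the trace if and only if none of the $\ell$ non-root nodes on its root-to-node path is marked, and crucially a node's appearance depends only on its own mark (with probability $q$) once its parent is present---the fate of its descendants is irrelevant, which is exactly why we use the trace and not the subtrace. Conditioning on the presence of the ancestor at level $\ell-m-1$, the child on the path survives with probability $1-q$, and the number of its left-siblings that survive is $\mathrm{Binomial}(i_m, 1-q)$, so it lands at position $j_m$ with probability $\binom{i_m}{j_m}(1-q)^{j_m}q^{i_m-j_m}$. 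Hence
\[
\p\!\left(a_{\ell, j_{\ell-1}\ldots j_0} = b_{\ell, i_{\ell-1}\ldots i_0}\right) = (1-q)^{\ell}\prod_{m=0}^{\ell-1}\binom{i_m}{j_m}(1-q)^{j_m} q^{i_m-j_m},
\]
which is precisely the formula in the proof of Lemma~\ref{genfun} with every $p_{d-\ell+m}$ replaced by the constant $q$ and the prefactor $(1-q)^{\ell-1}(1-p_{d-\ell})$ replaced by $(1-q)^{\ell}$. Since $q$ does not vary with the level, no recursion analogous to \eqref{eq:rec} and no analog of Lemma~\ref{rec} is needed here.

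Summing over source positions and applying the binomial theorem $\ell$ times, exactly as in Lemma~\ref{genfun}, I would obtain the generating-function identity
\[
\E[A_{\ell}(w)] = (1-q)^{\ell}\sum_{t_0=0}^{k-1}\ldots\sum_{t_{\ell-1}=0}^{k-1} b_{\ell, t_{\ell-1}\ldots t_0}\prod_{m=0}^{\ell-1}\bigl((1-q)w_m + q\bigr)^{t_m}.
\]
From here the argument of Section~\ref{sec:thm2proof_finish} carries over. Given two distinct labelings $X'$ and $X''$, I form the $\{-1,0,1\}$-valued difference $B(z)=\sum b_{\ell_*,i}\prod_m z_m^{t_m}$ at the minimal differing level $\ell_*$, invoke Lemma~\ref{lwd} to get $z^*\in(\partial\mathbb{D})^{\ell_*}$ with $|B(z^*)|\ge 1$, and substitute $w_m^* := (z_m^* - q)/(1-q)$. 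The only numerical change is in the analog of \eqref{eq:w_magnitude}: here $|w_m^*|\le (|z_m^*|+q)/(1-q) = (1+q)/(1-q)$, a finite constant for every $q<1$---this is where the AON model pays off, since we need only $q<1$. Consequently $\prod_m |w_m^*|^{t_m}\le ((1+q)/(1-q))^{k\ell_*}$, and the pigeonhole step yields an index $i_*$ with $|\E[Z'_{\ell_*,i_*}]-\E[Z''_{\ell_*,i_*}]|\ge \exp(-Ck\ell_*)\ge\exp(-Ck\log_k n)$, using that the depth of a complete $k$-ary tree on $n$ nodes is $\log_k n$.

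Finally, the estimator and its analysis are identical to the end of Section~\ref{sec:thm2proof_finish}: I would sample $T$ augmented traces, use the empirical mean of the coordinate $i_*(X',X'')$ to let one candidate labeling ``beat'' another, and apply a union bound over the at most $2^n$ candidate trees together with a Chernoff bound, exactly as in \eqref{eq:gap} and the display following it; choosing $T=\exp(3Ck\log_k n)$ drives the error probability to $0$. I do not expect a genuine obstacle, since the scheme is structurally the same as for Theorem~\ref{thm2}; the one place demanding care is the survival-and-position computation, where one must verify that in the AON model a surviving node's trace position is governed solely by the marks of its left-siblings and path-ancestors, independently of whether any subtrees survive, so that the factors are independent across levels and the clean constant-$q$ binomial structure emerges.
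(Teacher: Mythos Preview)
Your proposal is correct and follows essentially the same route as the paper: work with the full augmented trace, replace every $p_{d-\ell+m}$ by the constant $q$ to obtain the generating-function identity of Lemma~\ref{aon_genfun}, set $w_m^*=(z_m^*-q)/(1-q)$, and then repeat Section~\ref{sec:thm2proof_finish} verbatim with $p'$ replaced by $q$. The only cosmetic difference is that the paper bounds $|w_m^*|\le 2/(1-q)$ rather than your slightly sharper $(1+q)/(1-q)$, which is immaterial.
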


\begin{proof} 
We may define $A_{\ell}(w)$, the generating function for level $\ell$, exactly as in~\eqref{eq:genfn_def}. 
The following lemma is the analog of Lemma~\ref{genfun}; we omit its proof, since it is analogous to that of Lemma~\ref{genfun}. 
\begin{lemma}
\label{aon_genfun}
For every $\ell \in \{1, \ldots, d \}$ we have that 
\[
\E \left[A_{\ell}(w) \right] = (1-q)^{\ell}\sum_{t_0=0}^{k-1} \ldots \sum_{t_{\ell-1} = 0}^{k-1} b_{\ell, t_{\ell-1}\ldots t_0} \prod_{m=0}^{\ell-1}((1-q)w_{m} + q)^{t_m}.
\]
\end{lemma}

With this lemma in place, the remainder of the proof is almost identical to Section~\ref{sec:thm2proof_finish}. 
The polynomial $B(z)$, and hence also $z^{*}$, are as before. Now, we define 
$w_{m}^{*} := \left( z_{m}^{*} - q \right) / (1 - q)$. 
The right hand side of~\eqref{eq:mod_bound} becomes $(1-q)^{\ell_{*}}$. 
The analog of~\eqref{eq:w_magnitude} becomes the inequality $\left| w_{m}^{*} \right| \leq 2/(1-q)$; moreover, wherever $p'$ appears in Section~\ref{sec:thm2proof_finish}, it is replaced by $q$ here. 
Altogether, we obtain that there exists 
$i_{*} \in \left\{0,1,\ldots, k^{\ell_{*}}-1 \right\}$ 
such that 
\[
\left| \E \left[ Z'_{\ell_{*},i_{*}} \right] - \E \left[ Z''_{\ell_{*},i_{*}} \right] \right| 
\geq 
\exp \left( - C k \ell_{*} \right) 
\geq 
\exp \left( - C k d \right) 
\geq 
\exp \left( - C k \log_{k} n \right),
\]
where the first inequality holds for a large enough constant $C$ that depends only on $q$. 
The rest of the proof is identical to Section~\ref{sec:thm2proof_finish}, 
showing that $T = \exp \left( 3 C k d \right) = \exp \left( 3 C k \log_k n \right)$ traces suffice. 
\end{proof}

\begin{proof}[Proof of Theorem \ref{aon}] 
Suppose that $X$ is a rooted tree with arbitrary topology and let $k_{\max}$ denote the largest number of children a node in $X$ has. Once we sample a trace $Y$ from $X$, we form an augmented trace similarly to how we do it when $X$ is a $k$-ary tree, except now we add nodes with~$0$ labels to ensure that each node has $k_{\max}$ children. Thus, each augmented trace is a complete $k_{\max}$-ary tree. Now, let $X'$ denote a $k_{\max}$-ary tree obtained by augmenting $X$ to a $k_{\max}$-ary tree in the same fashion that we augment traces of $X$ to a $k_{\max}$-ary tree. 

As before, for each node $i$ on level $\ell$ of $X$, there is a unique representation $i = i_{\ell-1} \ldots i_{0}$ where $i_{m}$ is the position of node $i$'s ancestor on level $\ell - m$ among its siblings. 
Importantly, for every node in $X$, its representation in $X'$ is the same. 
This fact, together with the augmentation construction, implies 
that $\mathbb{E}[a_{\ell, t_{\ell-1}\ldots t_0}]$ for the node $a_{\ell, t_{\ell-1}\ldots t_0}$ in $Y$ is identical to $\mathbb{E}[a_{\ell, t_{\ell-1}\ldots t_0}]$ for the node $a_{\ell, t_{\ell-1}\ldots t_0}$ in $Y'$, which is a trace sampled from $X'$. 
Therefore, we can use the procedure presented in Theorem~\ref{aon_kary} to reconstruct $X'$ w.h.p.\ using $T = \exp \left( C k_{\max} d \right)$ traces sampled from $X$. 
By taking the appropriate subgraph of $X'$, we can thus reconstruct $X$ as well.
\end{proof}

\section{Conclusion}\label{sec:conclusion}

In this work we introduce the notion of a subtrace and demonstrate its utility in analyzing traces produced by the deletion channel in the tree trace reconstruction problem. We provide a novel algorithm for the reconstruction of complete $k$-ary trees, which matches the sample complexity of the combinatorial approach of~\cite{DRRAAP}, 
by applying mean-based complex analytic tools to the subtrace. 
This technique also allows us to reconstruct trees with more general topologies in the TED deletion model, specifically trees where the nodes at every level have the same number of children (with this number varying across levels). 

However, many questions remain unanswered; we hope that the ideas introduced here will help address them. In particular, how can we reconstruct, under the TED deletion model, arbitrary trees where all leaves are on the same level? Since the notion of a subtrace is well-defined for such trees, we hope that the proof technique presented here can somehow be generalized to answer this question.

%\clearpage

%%%%%%%%%%%%%%%%%%
%%% References %%%
%%%%%%%%%%%%%%%%%%

\bibliographystyle{plain}
\bibliography{ref.bib}

\end{document}